\let\today\relax
\def\ps@pprintTitle{%
    \let\@oddhead\@empty
    \let\@evenhead\@empty
    \def\@oddfoot{\footnotesize\itshape
         {Submitted preprint} \hfill\today}%
    \let\@evenfoot\@oddfoot
    }
\normalfont\fontsize{10}{17}\bfseries}{\thesubsection}{1em}{}
\journal{TBA}
\newtheorem{theorem}{Theorem}
\newtheorem{lemma}{Lemma}
\newtheorem{proposition}{Proposition}
\theoremstyle{remark}
\numberwithin{equation}{section}
\titleformat{\paragraph}[runin]{\normalfont\normalsize}{\theparagraph}{1em}{}[\,\bfseries .]
\titlespacing*{\paragraph}{0pt}{3.25ex plus 1ex minus .2ex}{10pt}
\title{
\textbf{Strategic Learning and Trading in Broker-Mediated Markets}
}
\author[label1,label2]{Alif Aqsha}
\address[label1]{Mathematical Institute, University of Oxford}
\address[label2]{Oxford-Man Institute of Quantitative Finance}
\ead{Alif.Aqsha@maths.ox.ac.uk}
\author[label2]{Fayçal Drissi}
\ead{faycal.drissi@eng.ox.ac.uk}
\author[label1,label2]{Leandro S\'{a}nchez-Betancourt}
\ead{leandro.sanchez-betancourt@kcl.ac.uk}
\begin{document}
\newtheorem{rem}{Remark}
\newcommand{\mat}[1]{{\mathcal{M}_{#1}(\mathbb{R})}}

\newcommand{\la}{\left \langle}
\newcommand{\ra}{\right\rangle}
\newcommand{\cb}[1]{{\color{blue} #1}}
\newcommand{\norm}[1]{\left\lVert #1 \right\rVert}
\newcommand{\bae}{\begin{equation}\begin{aligned}}
\newcommand{\eae}{\end{aligned}\end{equation}}
\newcommand{\beq}{\begin{equation}}
\newcommand{\eeq}{\end{equation}}
\newcommand{\N}{\mathbb{N}}
\newcommand{\R}{\mathbb{R}}
\newcommand{\E}{\mathbb{E}}
\newcommand{\Pb}{\mathbb{P}}
\newcommand{\PbI}{\mathbb{P}^I}
\newcommand{\PbB}{\mathbb{P}^B}
\newcommand{\Lb}{\mathbb{L}}

\newcommand{\mfT}{{\mathfrak{T}}}
\newcommand{\mfO}{{\mathfrak{O}}}
\newcommand{\tT}{{t\in\mfT}}
\newcommand{\mcA}{{\mathcal{A}}}
\newcommand{\mcC}{{\mathcal{C}}}
\newcommand{\mcF}{{\mathcal{F}}}
\newcommand{\mcN}{{\mathcal{N}}}
\newcommand{\mcB}{{\mathcal{B}}}
\newcommand{\mcH}{{\mathcal{H}}}
\newcommand{\alt}{{\textup{alt}}}

\newcommand{\transB}{\mathfrak{t}}
\newcommand{\decayB}{\mathfrak{p}}
\newcommand{\instantB}{\mathfrak{h}}

\newcommand{\assign}{:=}
\newcommand{\nobracket}{}
\newcommand{\tmop}[1]{\ensuremath{\operatorname{#1}}}
\newcommand{\tmtextit}[1]{\text{{\itshape{#1}}}}
%

\newcommand{\F}{\mathcal{F}}
\newcommand{\Prob}{\mathbb{P}}
\newcommand{\X}{\mathbb{X}}
\newcommand{\Ss}{\mathcal{S}}

\newcommand{\Real}{\mathbb{R}}
\newcommand{\Y}{\mathbb{\mathcal{Y}}}
\newcommand{\tmL}{\mathbb{L}}
\newcommand{\Ll}{\mathcal{L}}
\newcommand{\Pp}{\mathcal{P}}
\newcommand{\Bb}{\mathcal{B}}
\newcommand{\Dd}{\mathcal{D}}
\newcommand{\Ee}{\mathcal{E}}
\newcommand{\Nn}{\mathcal{N}}
\newcommand{\Q}{\mathbb{Q}}
\newcommand{\ch}{\textbf{1}}

\newcommand{\tempB}{\mathfrak{a}}
\newcommand{\tempI}{\mathfrak{b}}
\newcommand{\tempU}{\mathfrak{c}}
\newcommand{\permB}{\mathfrak{p}}
\newcommand{\termpB}{\phi}
\newcommand{\termpI}{\psi}
\newcommand{\runnB}{r^B}
\newcommand{\runnI}{r^I}

\renewcommand{\d}{\mathrm{d}}

\newcommand{\stateB}{\mathfrak{y}}
\newcommand{\stateI}{\mathfrak{x}}

\newcommand{\nuI}{\eta}
\newcommand{\nuB}{\nu}
\newcommand{\nuU}{\xi}

\newcommand{\nuBstar}{\nu^*}
\newcommand{\nuIstar}{\eta^*}

\newcommand{\varI}{\mathbb{V}^I}
\newcommand{\varB}[1][B]{\mathbb{V}^{#1}}
\newcommand{\varBalt}[1][\text{alt}, B]{\mathbb{V}^{#1}}

\newcommand{\betaI}{\beta_0^I}
\newcommand{\betaaI}{\beta_1^I}
\newcommand{\rhoI}{\rho_0^I}
\newcommand{\rhooI}{\rho_1^I}

\newcommand{\betaB}{\beta_0^B}
\newcommand{\betaaB}{\beta_1^B}
\newcommand{\rhoB}{\rho_0^B}
\newcommand{\rhooB}{\rho_1^B}

\newcommand{\runcostI}{\rho_0^I + \rho_1^I\, \varI}
\newcommand{\runcostB}{\rho_0^B + \rho_1^B\, \varB}
\newcommand{\runcostBalt}{\rho_0^B + \rho_1^B\, \varBalt}

\newcommand{\termcostI}{\beta_0^I + \beta_1^I\, \varI}
\newcommand{\termcostB}{\beta_0^B + \beta_1^B\, \varB}
\newcommand{\termcostBalt}{\beta_0^B + \beta_1^B\, \varBalt}

\newcommand{\constrB}{\tempB - f_2^2\,\tempI}

\newcommand{\MI}[1][I]{M^{#1}}
\newcommand{\MB}[1][B]{M^{#1}}
\newcommand{\MBa}[1][B]{\tilde{N}^{#1}}
\newcommand{\MBb}[1][B]{\tilde{M}^{#1}}
\newcommand{\MZ}[1][Z]{M^{#1}}

\newcommand{\QI}[1][I]{Q^{#1}}
\newcommand{\QB}[1][B]{Q^{#1}}
\newcommand{\QU}[1][U]{Q^{#1}}
\newcommand{\XI}[1][I]{X^{#1}}
\newcommand{\XB}[1][B]{X^{#1}}
\newcommand{\XU}[1][U]{X^{#1}}

\newcommand{\QIstar}[1][I]{Q^{#1,*}}
\newcommand{\QBstar}[1][B]{Q^{#1, *}}
\newcommand{\XIstar}[1][I]{X^{#1, *}}
\newcommand{\XBstar}[1][B]{X^{#1, *}}

\newcommand{\mcFB}{\mcF^B}
\newcommand{\mcFI}{\mcF^I}
\newcommand{\vfB}{J^{B*}}
\newcommand{\vfI}{J^{I*}}
\newcommand{\pcB}{J^B}
\newcommand{\pcI}{J^I}
\newcommand{\mcAB}{\mcA^B}
\newcommand{\mcAI}{\mcA^I}

\newcommand{\g}{g}
\newcommand{\gI}{g^{I}}
\newcommand{\gB}{g^{B}}
\newcommand{\gZ}{g^{Z}}
\newcommand{\gY}{g^{Y}}
\newcommand{\h}{h}
\newcommand{\hI}{h^{I}}
\newcommand{\hB}{h^{B}}
\newcommand{\hZ}{h^{Z}}
\newcommand{\hY}{h^{Y}}
\newcommand{\f}{f}
\newcommand{\fI}{f^{I}}
\newcommand{\fB}{f^{B}}
\newcommand{\fZ}{f^{Z}}
\newcommand{\fY}{f^{Y}}
\newcommand{\GF}{{\mathfrak{G}}}
\newcommand{\kF}{{\mathfrak{K}}}
\newcommand{\hf}{{\mathfrak{h}}}

\newcommand{\Pn}{P^{\hat{\nu}}}
\newcommand{\Pa}{P^{\hat{\alpha}}}

\newcommand{\sigmaU}{\sigma^U}
\newcommand{\sigmaM}{\sigma^M}
\newcommand{\sigmaS}{\sigma^S}

\newcommand{\fay}[1]{{\todo[fancyline,backgroundcolor=red!20!white]{\begin{spacing}{0.5}{\tiny #1}\end{spacing}}}}
\newcommand{\alif}[1]{{\todo[fancyline,backgroundcolor=red!20!white]{\begin{spacing}{0.5}{\tiny #1}\end{spacing}}}}
\newcommand{\leandro}[1]{{\todo[fancyline,backgroundcolor=blue!20!white]{\begin{spacing}{0.5}{\tiny #1}\end{spacing}}}}

\newcommand\leo{ \color{red}}

\begin{abstract}
We study strategic interactions in a broker-mediated market in which agents learn and exploit each other’s private information. A broker provides liquidity to an informed trader and to noise traders while managing inventory in a lit market. The informed trader infers the broker’s trading activity in the lit market, while the broker estimates the trader’s private signal. Information leakage in the client’s trading flow generates economic value for the broker that is comparable in magnitude to transaction costs: the broker can speculate profitably and manage risk more effectively, which in turn adversely affects the informed trader’s performance. Brokers therefore hold a strategic advantage over traders who rely solely on prices to filter information. When the broker only relies on prices rather than client trading flow to infer information, their trading performance becomes indistinguishable from the performance of a naive strategy that internalises noise flow, externalises informed flow, and offloads inventory at a constant rate.\\

\noindent{{\bf Keywords:} algorithmic trading, externalisation, internalisation, filtering, signal-to-noise, brokers.} 
\end{abstract}

\maketitle
\section{Introduction}
We study a model in which a broker (“she”) provides liquidity to an informed trader (“he”) and to noise traders by acting as their counterparty. Simultaneously, the broker manages her inventory exposure by trading in the lit market, where she faces both transaction costs and price impact. In this setting, the strategic agents, i.e., the broker and the informed trader, operate under partial and heterogeneous information. The informed trader observes a private signal, and his trading strategy accounts for the price impact of the broker’s unobserved trading activity, which he estimates from market prices. Meanwhile, the broker manages her exposure in the lit market, and her strategy accounts for the unobserved trader’s private signal, which she estimates either from her clients’ trading flow or from market prices.

In this paper, we show how each agent estimates the information they do not observe by filtering relevant processes. Specifically, the informed trader relies solely on market prices, which in liquid markets have a low signal-to-noise ratio, resulting in imprecise estimates. We show that, in practice, filtering based only on prices does not yield accurate estimates of the broker’s activity. In contrast, the broker has access to higher-quality information through her clients’ trading flow, in addition to market prices. The trading flow of informed clients produces significantly more informative signals, which improves the broker’s trading performance and protects her against adverse selection. As a result, the broker holds a strategic advantage over an informed trader who trades exclusively with her. In practice, informed traders therefore need to conceal information or distribute their trading flow across multiple brokers to preserve privacy.\footnote{See \cite{oomen2017execution} for a study of broker aggregation.}

We employ stochastic control tools to derive the optimal adaptive strategies of both agents when prices are used as the information source. Both agents maximise terminal expected wealth subject to inventory penalties, with risk aversion scaled according to the conditional variance of their respective estimators. We characterise the optimal behaviour of both players and provide conditions for the existence and uniqueness of optimal Markovian strategies, which we derive explicitly. Owing to the linear–quadratic structure of the problem, we apply the separation principle and characterise the solutions as systems of linear and matrix Riccati differential equations (RDEs). The matrix RDE governing the broker’s problem is nonstandard. When the broker learns the signal from prices, we show that this RDE admits a unique solution, which in turn guarantees a unique optimal Markovian strategy. When the broker instead uses the informed trader’s order flow, the matrix RDE admits a unique solution up to a non-explosion condition.

We  illustrate the adaptive strategies of both agents in extensive numerical experiments. Our results confirm that market prices constitute a poor information source for filtering: the broker’s price impact is too small relative to price noise to produce reliable estimates of her own trading activity, and price volatility makes it difficult to infer the informed trader’s private signal accurately. Consequently, the broker’s performance in this case resembles that of a naïve strategy that internalises noise flow, externalises toxic flow, and unwinds inventory at a constant rate. 

In contrast, when the broker uses the informed trader’s order flow to estimate the signal, her performance improves substantially. This improvement arises for two reasons: higher-quality signals allow the broker to speculate more effectively on future prices, and they enable more precise internalisation and externalisation decisions. In particular, the broker increases externalisation and speculative trading when signal values are high or when her estimates are more precise. We show that the economic value of the improved filtering is of the same order of magnitude as transaction costs. Finally, we study the robustness of the best-performing estimator and find that it remains robust to inventory misspecification during the early part of the trading horizon.

\paragraph{\textbf{Literature review}} 
Our work combines elements from the execution literature for liquidity takers and the market-making literature for liquidity providers.\footnote{Early contributions to algorithmic trading include \cite{bertsimas1998optimal}, \cite{ho1981optimaldealer}, \cite{almgren2001optimal}, and \cite{avellanedastoikov2008}, which spurred a large body of subsequent research on execution and market making; see, for example, the textbooks \cite{cartea2015algorithmic} and \cite{gueant2016financial}. } Our work also relates to the literature on stochastic games. A closely related contribution is \cite{bank2021liquidity}, which develops a continuous-time trading game with two interacting markets: a dealer market in which agents trade at an equilibrium price and liquidity providers charge a fee to execute liquidity takers’ demands, and an open market in which the demands originating in the dealer market are absorbed at an unaffected martingale price.\footnote{A broader literature studies stochastic games in financial markets. Examples include portfolio liquidation games with self-exciting order flow \cite{guanxing_portfolio_2022}, games with price-impact interaction under open-loop, closed-loop, and centralised equilibria \cite{micheli2023closed}, and Nash competition among market makers for client order flow \cite{herdegen2023liquidity}. Related work analyses infinite-dimensional games with mean-field interactions \cite{jaber2023equilibrium}, as well as optimal liquidation problems with  feedback effects on future order flow \cite{chenoptimaltrade2023}.} In contrast, the flow of information plays a central role in our setting. Only the broker has price impact, as she is the sole agent trading in the lit market where transactions are anonymous, and this impact is unobserved by the informed trader. Conversely, the informed trader observes a private trading signal that is not available to the broker. As a result, the broker and the informed trader engage in filtering observed processes to infer the information they do not possess.\footnote{Our work also relates to  literature combining filtering techniques and stochastic optimal control. Early contributions include hedging (\cite{pham2001mean,Pham2001optimal}), insider trading models  (\cite{aase2012strategic}), and portfolio choice (\cite{Lim_Quenez_2015}). Filtering is also studied in mean-field games with heterogeneous agents (\cite{casgrain2018mean,firoozi2018mfg}), robust control (\cite{allan2019parameter,allan2020pathwise}),  market making  (\cite{campi2020,drissi2022solvability,cartea2024strategic}), and costly information acquisition (\cite{knochenhauer2024continuous}).
}

Closest to our work are \cite{cartea2022broker} and \cite{bergault2024mean}, both of which study strategic interactions between a broker and her clients. In \cite{cartea2022broker}, the broker faces both informed and uninformed traders, and both the broker and the informed trader optimise their objectives under ambiguity aversion to disentangle their strategic interactions.\footnote{Related strands of the literature include work on internalisation and externalisation decisions in FX markets \cite{butz2019internalisation,barzykin2023algorithmic,cartea2023detecting}, filtering of toxic flow \cite{barzykin2024unwinding}, and trading with predictive signals in execution problems \cite{cartea2016incorporating,neuman2022optimal,drissi2022solvability,bergault2022multi}. Recently, \cite{cartea2024nash} solve the perfect-information Nash equilibrium of the game. In contrast, we do not assume perfect information; instead, both agents must filter and infer the information they do not observe.
} On the other hand, \cite{bergault2024mean} study a large population of  traders with both private and common signals, and characterise the Nash equilibrium in the mean-field limit. We depart from this approach by modelling asymmetric information acquisition. In our framework, both agents actively filter the information they do not observe within their respective decision problems. To the best of our knowledge, this paper, together with \cite{wu2024broker}, is among the first to study settings in which both liquidity providers and  takers optimise their strategies while filtering missing information. While \cite{wu2024broker} analyse the Nash equilibrium in a multidimensional setting, our approach adopts a Stackelberg-type formulation.

The remainder of the paper proceeds as follows. Section \ref{sec: model} reviews the key elements of the model. Section \ref{sec: informed trader} formulates the filtering problem of the informed trader and solves his stochastic control problem. Section \ref{sec: broker} introduces the filtering problems of the broker and solves both of her stochastic control problems. 
In particular, we 
propose two alternative filters (and solve their respective control problems) for how  the broker extracts an estimate of the signal that she does not observe. 
Section \ref{sec:numerical_results} carries out simulations to study the performance of the strategies we derive against three  benchmarks.

\section{Modelling framework}\label{sec: model}

Let $T>0$ be a fixed trading horizon and let $\mfT = [0,T]$. We fix a filtered probability space $\left(\Omega, \mcF, (\mcF_t)_{t\in\mfT}, \Pb \right)$ satisfying the usual conditions of right continuity and completeness. The probability space supports four standard Brownian motions $W^S$, $W^\alpha$, $W^B$, and $W^U$. 
The full-information filtration $(\mcF_t)_{t\in\mfT}$ is that generated by the four Brownian motions. 
We consider two sub-filtrations in our model: (i) the filtration $\left(\mcFI_t\right)_{t\in\mfT}$ of the informed trader, and (ii) the filtration $\left(\mcFB_t\right)_{t\in\mfT}$ of the broker, which we define below. 

The broker provides liquidity to the informed trader and to the uninformed trader by acting as counterparty to their trades throughout the trading window $\mfT$. The informed trader and the uninformed trader carry out their trades at speeds $\nuI$ and $\nuU$ respectively. Positive values of the trading speed denote buying, and negative values  denote selling.   As remuneration for this service, the broker charges transaction costs to both traders. In our model, the transaction costs charged by the broker are linear in her clients' trading speed and are modelled by the positive coefficients  $\tempI$ (for the informed trader), and $\tempU$ (for the uninformed trader).\footnote{If the broker charges the same fee to all clients, then we take $\tempI=\tempU$.} 
In parallel to providing liquidity, the broker manages her exposure and inventory  
by trading in the open (or lit) market at a trading speed $\nuB$. We assume that the broker incurs linear transaction costs modelled by the positive constant $\tempB$, and that her trading activity in the market generates a linear permanent impact modelled by the coefficient $\permB > 0$.

The midprice of the asset is driven by a stochastic signal $\alpha$ which is privately observed by the informed trader. Let $\left(S_t\right)_{t\in\mfT}$ denote the process describing the midprice of the asset, whose  dynamics are
\begin{equation}
\label{eq: price dynamics}
  \d S_t = (\permB\,\nuB_t + \alpha_t)\,\d t + \sigma^S \d W_t^S\,,\qquad S_0 \in\mathbb{R}^+\,,
\end{equation}
where  $\sigma^S>0$ is the price volatility, and $W^S$ is a standard Brownian motion. The signal process $\left(\alpha_t\right)_{t\in\mfT}$  follows the Ornstein–Uhlenbeck dynamics 
\begin{equation}
 \d \alpha_t = -\kappa^\alpha\,\alpha_t\,\d t + \sigma^\alpha\,\d W^\alpha_t\,,\qquad \alpha_0 \in\mathbb{R}\,,
\end{equation}
where $\kappa^\alpha\geq 0$ is the mean-reversion parameter, $\sigma^\alpha>0$ is the volatility of the signal, and $W^\alpha$ is a Brownian motion. The signal and price processes may share common stochastic factors, so  we assume that $W^\alpha$ and $W^S$ are correlated and write $\langle W^S, W^\alpha \rangle_t  = \rho\,t$, where $\rho \in [-1, 1]$.

Let $\left(\QB_t\right)_{t\in\mfT}$ and $\left(\XB_t\right)_{t\in\mfT}$ be the processes that describe 
the inventory  and the cash of the broker, respectively, with dynamics
\begin{align}
    \d \QB_t &= \left(\nuB_t - \nuI_t - \nuU_t\right)\d t\,,\qquad Q^B_0 = q^B\in\mathbb{R}\,,\label{eq: inventory of broker}\\
    \d \XB_t &= - \left(S_t + \tempB\,\nuB_t\right)\nuB_t\,\d t +   \left(S_t + \tempI\,\nuI_t\right)\nuI_t\,\d t +   \left(S_t + \tempU\,\nuU_t\right)\nuU_t\,\d t\,,\qquad X^B_0 = 0\,,\label{eq: cash of broker}
\end{align}
where $\tempB,\tempI,\tempU$ are the transaction costs coefficients described above. Similarly, the informed trader's inventory process $\left(\QI_t\right)_{t\in\mfT}$ and cash process $\left(\XI_t\right)_{t\in\mfT}$ follow the dynamics
\begin{align}
    \d \QI_t &= \nuI_t\,\d t\,,\qquad Q^I_0 = q^I\in\mathbb{R}\,,\label{eq: inventory of informed}\\
    \d \XI_t &= - \left(S_t + \tempI\,\nuI_t\right)\nuI_t\,\d t\,,\qquad X^I_0 = 0\,.\label{eq: cash of informed}
\end{align}
Below, we use the notation $P^{I,\nuI}$ to emphasise that a process $P$ describes a property of the informed trader and is controlled through the process $\nuI.$ Similarly, we use the notation $P^{B,\nuB}$ for the broker.

In our model, the broker manages her inventory  while providing liquidity to her clients, and the informed trader exploits his private signal. 
Both agents assume parametric models for their competitor's behaviour, and employ the information they observe to filter their competitor's private information. 
On the one hand, the broker observes the midprice $S$ of the asset, and given that she is counterparty to the trades of her clients, she also observes the trading speed of the informed trader $\nuI$ and the trading speed of the uniformed trader $\nuU$. In the main model, the broker filters the midprice minus her impact on prices to estimate the private signal $\alpha$ (which 
she does not observe). 
On the other hand, the informed trader observes his signal $\alpha$ and the midprice $S$, but he does not observe the trading speed of the broker. The informed trader assumes a parametric form for the broker's strategy, which he uses to estimate the broker's trading  speed $\nuB$. Both agents employ the Kalman-Bucy filter in their respective control problems. Sections \ref{sec: informed trader} and \ref{sec: broker} derive the optimal strategies of both agents.

\section{The informed trader}\label{sec: informed trader}

This section derives the optimal adaptive strategy of the informed trader. 

\subsection{Learning equations} In our model, the informed trader does not directly observe the trading activity of the broker in the open market, that is, he does not observe $\nuB$. However, he observes the signal $\alpha$ and he knows that the broker's trading activity has a permanent impact. Thus, to infer the broker's trading rate, he uses the 
process $\left(Y_t\right)_{t\in \mfT}$ with dynamics
\begin{equation}
  \d Y_t = \d S_t - \alpha_t \, \d t = \permB\,\nuB_t \, \d t + \sigma^S \d W_t^S\,.
\end{equation}

In practice, the informed trader has little a-priori knowledge on the behaviour of the broker. Here, he considers a probability measure $\PbI$ where the  dynamics of the price follow \eqref{eq: price dynamics} but with the following parametric form for $\nuB$, given by
\begin{equation}
\label{eq: nuB_according_to_trader}
\d \nuB_t = -\theta^B\, \nuB_t \, \d t + \sigma^B \d W_t^B \,,
\end{equation}
where $\theta^B, \sigma^B > 0$, and $W^B$ is a Brownian noise independent of $W^S$. The dynamics in \eqref{eq: nuB_according_to_trader} are a key modelling assumption to derive the closed-form solution to the informed trader's adaptive control problem, i.e., with filtering. The parameters $\theta^B$ and $\sigma^B$ encode what the informed trader believes about the dynamics of the trading speed of the broker.

Here, we use the Kalman-Bucy filter for processes with linear drift and diffusion terms (see e.g., \cite{filtering_alain_crisan} and Chapter 22 in \cite{Cohen15}) to solve the informed trader's stochastic filtering problem of $\nuB$ from observing $Y$. The filter is characterised by the dynamics of its mean and (co)variance. More precisely, assume \eqref{eq: nuB_according_to_trader} holds and let $\mcFI_t = \sigma(\alpha_s, Y_s:\, s\in [0,t])\vee \mcN$  where $\mcN$ is the set of $\Pb$-null sets of the full sigma-algebra $\mcF$,  $\mcFI = \sigma\left( \bigcup_{t\in\mfT} \mcFI_t\right)$, and $\mathcal{C} \vee \mathcal{D}$ denotes the smallest sigma-algebra containing $\mathcal{C}$ and $\mathcal{D}$.  Let $\hat{\nuB}_t = \E \left[ \nuB_t\, |\, \mcFI_t\right]$ be the projection of $\nuB_t$ onto $\mcFI_t$ and let $\varI_t = \E \left[ (\nuB_t - \hat{\nuB}_t)^2\, |\, \mcFI_t \right]$ be its conditional variance. The \cite{kalman1961new} filter gives 
\begin{align}
  \d \hat{\nuB}_t &= -\theta^B \,\hat{\nuB}_t \, \d t + \frac{1}{(\sigma^S)^2}\, \permB\, \varI_t\,  \left( \d Y_t - \permB\, \hat{\nuB}_t \, \d t \right) \,,\label{eq: hat nuB} \\
  \d \varI_t &= \left[(\sigma^B)^2 - 2\,\theta^B\, \varI_t - \frac{\permB^2\, \left(\varI_t\right)^2}{(\sigma^S)^2}\right] \, \d t \,. \label{eq: varI}
\end{align}

For an initial distribution of
$\nuB_0$, the informed trader solves \eqref{eq: hat nuB} to obtain the estimate $\hat{\nuB}$ of 
$\nuB$, with confidence intervals given by the variance $\varI$ in \eqref{eq: varI}. In particular, the dynamics of the variance in \eqref{eq: varI} are described by an ordinary differential equation (ODE) of the Riccati-type that admits a closed-form solution. 
For simplicity, in what follows, the informed trader assumes $\nuB_0=0$ so he sets $\varI_0 = 0$.

Equipped with the estimate  $\hat{\nuB}$ of the broker's strategy in \eqref{eq: hat nuB},   the informed trader derives that under $(\Omega,\mcFI, (\mcFI_t)_{t\in\mfT}, \PbI)$, the midprice $S$ follows the dynamics 
\begin{align}
  \d S_t = & (\permB\,\hat{\nuB}_t + \alpha_t) \, \d t + \sigma^S\,\d \tilde{W}_t^{S,I}\,,
\end{align}
where $\hat{\nuB}$ is as in \eqref{eq: hat nuB} and $\tilde{W}^{S,I}_t$ is a  $(\PbI, (\mcFI_t)_{t\in\mfT})$-Brownian motion.

\begin{rem}
    Note that the filtration $(\mcFI_t)_{t\in \mfT}$ is now fixed. This is because the latent $\nuB$ is assumed to be exogenous and does not depend on the informed trader's control. As such, the observation process $Y$ is already decoupled from the informed trader's control and there is no ``chicken and egg'' problem as in Section 9.1 of \cite{bensoussan2018estimation}.
\end{rem}

\subsection{The control problem of the informed trader} 
The informed trader exploits her trading signal to speculate and maximise her trading performance. However, he is averse to uncertainty about his competitor's trading behaviour.  More precisely, the set of  admissible strategies is
\begin{equation}
    \mcAI := \left\{ \nuI = \left(\nuI_t\right)_\tT \, :\, \nuI \text{ is $(\mcFI_t)_{t\in\mfT}-$progressively measurable } \text{and} \, \, \E^I \left[ \int_0^T (\nuI_s)^2\, ds \right] < \infty \right\}\,,
\end{equation}
and for every $\nuI\in\mcAI$, 
the informed trader maximises the performance criterion
\begin{equation}\label{eq: perf crit informed}
    H^{I,\nuI}(t, \stateI) = \E^I_{t, \stateI} \left[ \XI[I,\nuI]_T + \QI[I,\nuI]_T\,S_T  - (\beta_0 + \beta_1\, \varI_T)\,\left(\QI[I,\nuI]_T\right)^2 - \int_t^T (\rho_0 + \rho_1\, \varI_s) \left(\QI[I,\nuI]_s\right)^2 \, \d s \right] \,,
\end{equation}
where $\beta_0,\beta_1,\rho_0, \rho_1>0$, $\stateI = (s, x^I, q^I, \hat{\nuB})$, and the expectation is under $\PbI$. 
The first two terms in the expectation in \eqref{eq: perf crit informed} are the informed trader's terminal wealth composed of the marked-to-market value of his inventory and his terminal cash. The last two terms represent a terminal and running risk aversion components, respectively. In particular, we introduce the new terms $\beta_1\,\varI$ and $\rho_1\,\varI$ to adjust risk 
aversion according to the conditional variance $\varI_t$.\footnote{When $\beta_1 = \rho_1 = 0$ we recover a performance criterion that is widely used in algorithmic trading; see e.g., \cite{donnelly2022optimal} for a recent review article.} These terms penalise any terminal or running inventory according to the informed trader's uncertainty because higher conditional
variance implies more uncertainty in the estimation of the true $\nuB$. 
The associated control problem is 
\begin{align}
H^I(t, \stateI) = \sup_{\nuI\in\mcAI}  H^{I,\nuI}(t, \stateI) \,. \label{eq:trader_control_problem}
\end{align}

\subsection{The optimal adaptive strategy}
The above control problem has a linear-quadratic structure, hence, we are able to characterise the solution with a Riccati differential equation and linear ODEs. We show existence and uniqueness of these differential equations and use them to solve the control problem of the informed trader. 
The following theorem, whose proof is in the appendix,  shows the solution to the adaptive control problem of the informed trader.
\begin{theorem}\label{thm: solution Informed problem}
The HJB equation \eqref{eq:trader_hjb} admits a unique solution in $C^{1,2}([0,T], \Real^{|\stateI|})$ in the form of
\begin{equation}
\label{eq:trader_hjb_solution}
    H^I(t, \stateI) = x^I + q^I\, s + g_0^I  (t, \alpha, \hat{\nuB}) \, + q^I\, g_1^I  (t, \alpha, \hat{\nuB}) + \left(q^I\right)^2 \, g_2^I (t)\,,
\end{equation}
where $g_i^I$, $i=0,1,2$ are defined in the appendix (Lemma \ref{lemma:trader_DEs_existence_solutions}). Moreover, there exists a constant $c<\infty$ such that
\begin{equation}
    |H^I(t, \stateI)| \leq c\, \left(1 + \|\stateI\|^2 \right)\quad \forall (t,\stateI) \in [0,T]\times \Real^{|\stateI|}\,.
\end{equation}
Finally, the trader's Markovian optimal trading rate for the control problem \eqref{eq:trader_control_problem} in feedback form is
\begin{equation}
\label{eq:01_informed_strategy_feedback_form}
    \nuIstar_t = \frac{1}{2\,\tempI} \Big({z_1^I(t)\,\alpha_t + z_2^I(t)\,\hat{\nuB}_t  + 2\,g_2 ^I(t)\,\QI[I*]_t }\Big)\,.
\end{equation}
where $z^I_1, z^I_2$ are deterministic functions defined in the appendix and they satisfy that $z_1^I, z_2^I\geq 0$ and $\g_2^I < 0$.
\end{theorem}

\begin{rem}
If the informed trader were to observe $\nuB$ and if $\nuB$ was indeed an OU process, then the HJB equation of such full-information control problem would differ from \eqref{eq:trader_hjb} in the coefficients of $\partial_{\hat{\nuB},\hat{\nuB}} H^I$, $\partial_{s,\hat{\nuB}} H^I$, and $\partial_{\alpha,\hat{\nuB}} H^I$ only. From here, we use a version of Lemma \ref{lemma:trader_DEs_existence_solutions} to show that the equivalent coefficients to $g_2^I$ and $g_1^I$ we obtain in such full-information case are exactly the same but changing $\hat{\nuB}$ for $\nuB$. 
This is the separation principle discussed in Section 9.3 of \cite{bensoussan2018estimation}.
\end{rem}

The optimal strategy \eqref{eq:01_informed_strategy_feedback_form} of the informed trader has three components. The first component is a speculative component that exploits the private signal. The second component uses the estimate that he has about the trading speed of the broker to speculate about future price movements due to the permanent price impact of the broker. The last component reduces the inventory of the informed trader according to his uncertainty and to the level of transaction costs charged by the broker.

\section{The broker}\label{sec: broker}
This section derives the optimal adaptive strategy of the broker.

\subsection{Learning equations}\label{sec:learning equations broker}
The broker assumes a model where the informed trader's trading flow  is driven by his private signal. More precisely, 
she assumes that the informed trader executes the strategy \eqref{eq:01_informed_strategy_feedback_form}, so she assumes the parametric form 
\begin{equation}\label{eq: assumption broker}
    2\,\tempI\,\nuIstar_t = z_1^I(t)\,\alpha_t + z_2^I(t)\,\hat\nuB_t  + 2\,g_2^I (t)\,\QI[I*]_t\,,
\end{equation}
for deterministic functions $z_i(t)$ for $i=0,1,2$. 

However, the broker does not observe $\alpha$, nor can she construct the filter $\hat\nuB_t$ because the innovation process requires knowledge of $\alpha$. 
Thus,  the broker replaces the signal $\alpha$  in  \eqref{eq: assumption broker} with an estimate $\hat\alpha$. Below, Section \ref{sec: main model} assumes that the broker employs midprices to construct this estimate and in Section \ref{sec: alternatives} the broker employs the trading speed of the informed trader as the source of information. One of our main results follows from comparing the performance of these two approaches.

\subsubsection{Learning from prices}\label{sec: main model}

In this subsection the broker filters the signal from observing the price process. To do this, the broker 
replaces the unknown processes $\alpha$ and $\hat{\nuB}$ (used in the informed trader's strategy) with the known measurable processes $\hat{\alpha}$ and $\nuB$, respectively. More precisely, the broker assumes that her competitor's trading speed takes the form
\begin{equation}
\label{eq: broker_model_assumption}
    \nuIstar_t = \frac{1}{2\,\tempI} \Big( {z_1^I(t)\,\hat{\alpha}_t + z_2^I(t)\,\nuB_t  + 2\,g_2^I (t)\,\QI[I*]_t } \Big) = f_1(t)\,\hat{\alpha}_t + f_2(t)\,\nuB_t + f_3(t) \,\QI[I*]_t \,.
\end{equation}
This latter assumption may be conservative, as perfect knowledge gives an obvious advantage to the informed trader. However, we show in the numerical examples of Section \ref{sec:numerical_results} that the broker can strategically influence (manipulate) the informed trader through this component, i.e., through knowing and exploiting that the informed trader's strategy is based on that of the broker. The functions $f_i$ for $i=1,2,3$ are those induced by the second equality.

The broker observes the process
\begin{equation}\label{eq: broker_model_assumption price}
    \d Z_t = \d S_t  - \permB\,\nuB_t\, \d t = \alpha_t\, \d t + \sigma^S\, \d W^S_t\,.
\end{equation} 
In \eqref{eq: broker_model_assumption price}, there is an implicit assumption that the trading speed of the broker is no longer an exogenously  observed process, but instead it satisfies an equation that is linear  in the filter $\hat{\alpha}$, the broker's control $\nu$, and the integral of $\nuIstar_t$ itself. 

The broker  projects $\alpha$ onto the  filtration $\mathcal{Z}_t = \sigma(Z_s:\, s\in [0,t])\vee \mcN$, where $\mcN$ is the set of $\Pb$-null sets of the full sigma-algebra $\mcF$.
Let $\hat{\alpha}_t = \E \left[ \alpha_t\, |\, \mathcal{Z}_t\right]$ be the broker's estimate and let $\varB = \E \left[ (\alpha_t - \hat{\alpha}_t)^2\, |\, \mathcal{Z}_t \right]$ be its conditional variance. The Kalman Bucy filter gives
\begin{align}
\label{eq: filter eq1 broker}
\d \hat{\alpha}_t &= -\kappa^\alpha\,\hat{\alpha}_t\,\d t + \frac{1}{\left(\sigma^S\right)^2}\left( \varB_t  + \rho\,\sigma^S\,\sigma^\alpha \right)\, \left[ \d Z_t - \hat{\alpha}_t\, \d t \right]\,,
\end{align}
and 
\begin{align}
    \d \varB_t &= \left[ (1-\rho^2)\, \left(\sigma^\alpha\right)^2 + 2\, \left(-\kappa^\alpha - \frac{1}{\sigma^S}\, \rho\, \sigma^\alpha\right)\, \varB_t - \frac{1}{\left(\sigma^S\right)^2}\, \left( \varB_t \right)^2  \right]\, \d t \label{eq: cond var filter 1 broker}\,.
\end{align}
In particular, the conditional variance $\varB$  of the estimator $\hat{\alpha}$ is a deterministic function of time. Finally, the broker fixes a probability measure $\PbB$ such that under $(\Omega,\mcFB, (\mcFB_t)_{t\in\mfT}, \PbB)$, the price follows the dynamics
\begin{align}
	\d S_t &= (\permB\,\nuB_t + \hat{\alpha}_t)\,\d t + \sigma^S\, \d \tilde{W}_t^{S,B}\,,
\end{align}
where $\tilde{W}_t^{S,B}$ is a $(\PbB, (\mcFB_t)_{t\in\mfT})$-Brownian motion, and the filtration $\mcFB_t  = \sigma(S_s, Z_s, \nuU_s:\, s\in [0,t])\vee \mcN$.

\subsubsection{Learning from  trading flow}\label{sec: alternatives}
In this subsection  the broker employs the trading flow  as a source of information from which she obtains an estimate of the private signal $\alpha$. Observe that  $\nuIstar$ is a measurable stochastic process that carries information about the unknown signal $\alpha$; c.f. with the assumptions in Section \ref{sec: main model}.
Using the parametric form \eqref{eq: assumption broker}, it follows that the broker observes the process ${\gamma}_t:=\nuIstar_t - f_3(t)\, \QI[I*]_t = f_1(t)\, \alpha_t + f_2(t)\, \hat{\nuB}_t$. Observe that this implicitly assumes that the broker knows the value of the parameters in the dynamics of the alpha signal. Below, in the numerical section we carry out robustness checks to test how the performance of the broker's strategy changes as the broker mispecifies the values of these model parameters.

Using $\gamma_t$ we define a scaled version $\tilde{Z}^\alt_t$ and a shifted version $Z^\alt_t$. For the sake of presentation we report the details in the appendix. The key result is that one is left with the 
filtering framework
\begin{align}
    \d \alpha_t &= -\kappa^\alpha\, \alpha_t\, \d t + \sigma^\alpha\, \d W_t^\alpha\,,\\
    \d Z_t &= \mathfrak{G}_7(t)\, \alpha_t\, \d t + \d \tilde{W}_t\,,
\end{align}
where $Z^\alt$ is the observed process, $\mathfrak{G}_7$ is a deterministic function reported in  \ref{app: learning from order flow}, $\tilde W$ is a $(\Pb, (\F_t)_{t\in \mfT}$-Brownian motion, and $\alpha$ is the unobserved process.
Finally, she takes $(\mathcal{Z}_t)_\tT$ as the augmented filtration generated by the process $Z^\alt$. The proposition below computes, explicitly, the dynamics of the new filter.
\begin{proposition}
\label{prop: alternative filter from trader's speed dynamics}
    Let $\hat{\alpha}^\alt_t:=\E \left[\alpha_t \, |\, \mathcal{Z}_t\right]$ and $\varB[\textup{alt}, B]_t:=\E \left[(\alpha_t - \hat{\alpha}^\alt_t)^2 \, |\, \mathcal{Z}_t\right]$. The processes $\hat{\alpha}^\alt$ and $\varB[\textup{alt}, B]$ satisfy the following stochastic differential equations,
    \begin{align}
        \d \hat{\alpha}^\alt_t &= -\kappa^\alpha\, \hat{\alpha}^\alt_t\, \d t + \left[ \mathfrak{G}_7(t)\, \varB[\textup{alt}, B]_t + \sigma^\alpha\, \mathfrak{K}(t) \right]\, \d I_t\label{eq: alpha mean estimate}\,,\\
        \d \varB[\textup{alt}, B]_t &= \left[ (\sigma^\alpha)^2 - 2\, \kappa^\alpha\, \varB[\textup{alt}, B]_t - (\mathfrak{G}_7(t)\, \varB[\textup{alt}, B]_t + \sigma^\alpha\, \mathfrak{K}(t))^2 \right]\, \d t \label{eq: alpha var estimate}\,,
    \end{align}
    where $\mathfrak{K}(t)$ is a deterministic function and $I$ is an innovation process satisfying the equation
    \begin{equation}
        I_t := Z_t - \int_0^t \E \left[ \mathfrak{G}_7(s)\, \alpha_s \, \Big| \,\mathcal{Z}_s \right]\, \d s\,,
    \end{equation}
    which is a $(\Pb, (\mathcal{Z}_t)_{t\in\mfT})$-Brownian motion.
\end{proposition}
The proof of Proposition \ref{prop: alternative filter from trader's speed dynamics} is in \ref{app:proof of proposition}. As the noise flow $\nuU$ is independent of either $W^S$ and $W^{\alpha}$, the above result would not change if we replace the filtration $(\mathcal{Z}_{t})_{t\in \mfT}$ with the filtration $\mcF^{B,\alt}_t  := \sigma(Z^\alt_s, \nuU_s:\, s\in [0,t])\vee \mcN$.

\subsection{The control problems of the broker}  
This section derives the closed-form optimal strategy  of the broker. Section \ref{solving when broker filters prices} employs the learning equations of Section \ref{sec: main model}, i.e., when she employs the midprice as the source of information, and Section \ref{solving when broker filters flows} employs the learning equations of Section \ref{sec: alternatives}, i.e., when she employs the trading speed of the informed trader as the source of information. For the first, the filtration of the broker is $(\mcFB_t)_{t\in \mfT}$, and for the second, the filtration is $(\F^{B,\alt}_t)_{t\in \mfT}$.

In the numerical experiments below, we study the performance of the broker's strategy when she estimates the private signal according to \eqref{eq: filter eq1 broker}--\eqref{eq: cond var filter 1 broker} and also \eqref{eq: alpha mean estimate}--\eqref{eq: alpha var estimate}.

The uninformed order flow $\nuU$ is  mean-reverting around zero and we write its dynamics as
\begin{align}
\d\nuU_t &= -\kappa^U\, \nuU_t\,\d t + \sigma^U\, \d W_t^U\,, \quad \nuU_0 = 0\,.
\end{align}
The inventory $\QB$ of the broker evolves with the orders that she fills for her clients and her own trading in the lit market as follows
\begin{align}
\d \QB[B,\nuB] &= \left(\nuB_t - \nuIstar_t - \nuU_t\right)\, \d t\,, \quad \QB[B,\nuB]_0=0\,,\\
\d \QI[I,\nuIstar]_t &= \nuIstar_t\, \d t\,,\quad \QI[I,\nuIstar]_0=0\,.
\end{align}
Similarly, the broker earns the transaction costs charged to her clients and incurs transaction costs from her own trading activity, so her cash $\XB$ 
follows the dynamics
\begin{align}
\d \XB[B,\nuB]_t &= \big( -\nuB_t\, \left(  S_t + \tempB\, \nuB_t\right) + \nuIstar_t\, \left(  S_t + \tempI\, \nuIstar_t\right) + \nuU_t\, \left(  S_t + \tempU\, \nuU_t\right)\big)\, \d t\,, \quad \XB[B,\nuB]_0=0\,,
\end{align}
where the transaction cost coefficients $\{\tempB,\tempI,\tempU\}$ are described in Section \ref{sec: model}.

For a given strategy $\nuB$, she maximises the performance criterion 
\begin{align}
&H^{B,\nuB}(t,\stateB)= \E_{t,\stateB} \left[ \XB[B,\nuB]_T + \QB[B,\nuB]_T \,S_T  - \left(\beta_0^B +
		\beta_1^B\,\varB_T\right) \left(\QB[B,\nuB]_T\right)^2 - \int_t^T \left(\rho_0^B + \rho_1^B\,\varB_s\right) \left(\QB[B,\nuB]_s\right)^2 \, \d s \right]
\end{align}
where $\stateB = (s, x^B, q^B, \hat{\alpha}, \nuU, q^I)$ for when the broker uses prices as the source of information, and $\stateB = (s, x^B, q^B, q^I, \gamma, \hat{\alpha}, \nuU)$ when the broker uses the trading speed of the informed trader as the source of information.  Thus, the control problem that the broker  solves is 
\begin{equation}
\label{eq: broker control problem}
H(t,\stateB) = \sup_{\nuB\in\mcAB} H^{B,\nuB}(t,\stateB)\,.
\end{equation}
where the set of admissible control $\mcA^B$ depends on the broker's learning strategies. For the control problem when the broker filters prices, she considers the following set of admissible controls
\begin{equation}
    \mcA^{B} := \left\{ \nuB = \left(\nuB_t\right)_\tT \, :\, \nuB \text{ is $(\mcFB_t)_{t\in\mfT}-$progressively measurable } \text{and} \, \, \E^B \left[ \int_0^T (\nuB_s)^2\, ds \right] < \infty \right\}\,.
\end{equation}
When the broker filters the informed trader's flow, she considers the following set of admissible controls
\begin{equation}
    \mcA^{B} := \left\{ \nuB = \left(\nuB_t\right)_\tT \, :\, \nuB \text{ is $(\F^{B,\alt}_t)_{t\in\mfT}-$progressively measurable } \text{and} \, \, \E \left[ \int_0^T (\nuB_s)^2\, ds \right] < \infty \right\}\,.
\end{equation}

Here we solve the optimal control for both learning equations.

\subsubsection{Solving the control problem when the broker filters prices}
\label{solving when broker filters prices}
In  \ref{app: derivations broker} we provide detailed derivations of the equations that characterise the value function $H$, which is given by
\begin{equation}
  H(t,\stateB) = x + q^B\, s + G_0 + 2\, G_1\, \tilde{\stateB}^\top + \tilde{\stateB}\, G_2\, \tilde{\stateB}^\top
\end{equation}
where $\tilde{\stateB} = \begin{pmatrix} q^B & \hat{\alpha} & \nuU & q^I \end{pmatrix}$, and $G_0:[0,T]\to \Real$, $G_1:[0,T]\to \mat{1\times 4}$, and $G_2:[0,T]\to \mat{4\times 4}$  are deterministic functions of time reported in the appendix. The notation $\mat{n\times m}$ denotes the space of $n$ by $m$ matrices with values in $\mathbb{R}$. This ansatz helps us characterise the solution to the control problem provided that the Riccati equation that $G_2$ satisfies admits a solution. It turns out that there is a region in the parameter space for which this equation admits a unique solution.\\

The next lemma shows existence and uniqueness of a solution to the differential equation in $G_2$. We assume that the values of the model parameters are such that $1 + \tempI\, f_3(t) > 0$ for all $t\in [0,T]$. Note that  $\tempI\, f_3/2 = g_2^I$, thus, the inequality can be written as $g_2^I>-1/2$. Given $g_2^I$ satisfies the ODE in \eqref{eq:trader_sys_1}, we see that if the value of the temporary impact and terminal inventory risk parameters for the informed trader ($\tempI$, $\betaI$, and $\betaaI$) are small, this condition is met. In the numerical examples below the values of the model parameters are such that this condition is satisfied.
\begin{proposition}
\label{prop: freiling riccati existence}
    Assume that the values of model parameters are such that  $1 + \tempI\, f_3(t) > 0$ for all $t\in [0,T]$. 
    There exists $\delta>0$ such that for all $\permB\in [0, \delta)$, the matrix Riccati equation \eqref{eq: broker riccati} admits a unique solution $G_2: [0,T] \to \mat{4\times 4}$ with terminal condition \eqref{eq:term cond G2}. 
\end{proposition}

Once the existence and uniqueness of $G_2$ is established, the existence and uniqueness of $G_0$ and $G_1$ is straightforward. The following theorem, whose proof is in \ref{app: proof of theorem broker}, solves the adaptive control problem of the broker.

\begin{theorem}\label{thm: solution problem broker}
Let $\permB\in[0,\delta)$ with $\delta$ as in Proposition \ref{prop: freiling riccati existence}.
Then,  the HJB equation \eqref{eq: broker HJB} admits a unique solution in $C^{1,2}([0,T], \Real^{|\stateB|})$ with quadratic growth in $\stateB$. Moreover, the broker's Markovian optimal trading rate for the control problem \eqref{eq: broker control problem} in feedback form is given by 
\begin{equation}  \label{eq:broker_optimal_trading_rate_feedback}
    \nuBstar_t = \frac{1}{\sqrt{\tempB - \tempI\, f_2^2(t)}} \, \left( P_7 + 2\, P_8\, G_2(t) \right)\, \tilde{\stateB}^\top_t\,,
\end{equation}
where $\tilde{\stateB}_t = \begin{pmatrix} Q^{B*}_t & \hat{\alpha}_t & \nuU_t & Q^{I}_t \end{pmatrix}$, and $P_7,P_8\in\mat{1\times 4}$ are in \eqref{eq: P7} and \eqref{eq: P8} in the Appendix.
\end{theorem}

The optimal adaptive strategy \eqref{eq:broker_optimal_trading_rate_feedback} of the broker balances internalisation and externalisation of the trading flow of her clients and has four components. The first component reduces the inventory of the broker according to her model parameters---trading costs, trading revenue from clients, and aversion parameters. The second component exploits the estimated private signal of the informed trader. The third component adjust the strategy according to the trading activity of the uninformed traders. Finally, the last component adjusts the strategy according to the informed trader inventory level. Below, Section \ref{sec:numerical_results} thoroughly studies the internalisation and externalisation components in the strategy of the broker.

\subsubsection{Solving the control problem when the broker filters the trading speed}

\label{solving when broker filters flows}

In  \ref{app: derivations broker} we provide detailed derivations of the equations that characterise the value function $H$, which is given by
\begin{equation}
    H(t, \tilde{\stateB}) = x^B + q^B\, s + h_{0,0}(t) + 2\, H_{0,1}(t)\, \tilde{\stateB}^
\top + \tilde{\stateB}\, H_2(t)\, \tilde{\stateB}^\top
\end{equation}
where $\tilde{\stateB} = \begin{pmatrix} q^B & q^I & \gamma & \hat{\alpha} & \nuU \end{pmatrix}$, and $h_{0,0}:[0,T]\to \Real$, $H_{0,1}:[0,T]\to \mat{1\times 5}$, and $H_2:[0,T]\to \mat{5\times 5}$  are deterministic functions of time reported in \ref{calculations for hjb with alternative filter}. \\

The existence of a classical solution to the HJB equation for this problem depends on a $3\times 3$ matrix Riccati equation reported in \eqref{eq: matrix riccati hjb alternative filter}.

\begin{theorem}
    If there is a solution to the matrix Riccati equation \eqref{eq: matrix riccati hjb alternative filter}, the HJB equation \eqref{eq: hjb equation alternative filter} admits a unique solution in $C^{1,2}([0,T], \Real^{|\stateB|})$ with quadratic growth in $\stateB$. Moreover, the broker's Markovian optimal trading rate for the control problem \eqref{eq: broker control problem} in feedback form is given by
    \begin{equation}
        \nuBstar_t = \hf_0(t) + (\permB + \hf_1(t))\, \QB_t + \hf_2(t)\, \QI_t + \hf_3(t)\, \gamma_t + \hf_4(t)\, \hat \alpha^{\text{alt}}_t + \hf_5(t)\, \nuU_t\,,
    \end{equation}
    where
    \begin{equation}
        \hf_{i}(t) = \frac{h_{1,i}(t)+\GF_1(t)\, h_{3, i}(t)}{\tempB}, \quad i=0, \dots, 5
    \end{equation}
    and the functions $h_{i,j}$ are given in \ref{calculations for hjb with alternative filter}.
\end{theorem}

\section{Numerical results}\label{sec:numerical_results}

We employ the following values of model parameters that largely follow \cite{cartea2022broker}. For the price process, the value of the permanent price impact parameter is $\permB = 10^{-3}$, the volatility is $\sigma^S = 1$, and the initial value of the price is $S_0 = 100$. The parameter of the signal are as follows: the mean-reversion speed is $\kappa^\alpha = 5$, the noise of the signal $\sigma^\alpha = 1$, and the initial value is $\alpha_0 = 0$. The correlation between the noise in the price and the signal is zero  ($\rho=0$). For the uninformed trading flow, we set $\kappa^U = 15$, $\sigma^U = 100$, and $\nuU_0=0$. The temporary price impacts for the broker, informed, and uninformed trader are  $\tempB = 2.1\times 10^{-3}$, $\tempI = 2\times 10^{-3}$, and $\tempU = 2\times 10^{-3}$, respectively.

In the informed trader's learning equations, we set $\sigma^B = 60$ and $\theta^B = 10$. We assume that the initial estimate $\hat{\nuB}_0$ is zero and we set the initial conditional variance $\varI_0 = 0$. In the broker's learning equations, we use the values of the true parameters that describe the dynamics of $\alpha$, and she sets $\alpha_0=0$ and $\varB_0 = 0$. Both  agents employ similar risk aversion parameters; we set  $\betaI = \betaB = 10^{-1}$, $\betaaI = \betaaB = 10^{-3}$, $\rhoI = \rhoB = 10^{-3}$, and $\rhooI = \rhooB = 10^{-5}$. In \ref{sec: Nash}, we discuss how an equilibrium of beliefs could be reached between the two players.
We run numerical simulations where we partition the time horizon $[0,T]$, with $T=1$, into $1,000$ equally-spaced intervals and we run $10,000$ simulations.\footnote{The code is available at \url{https://github.com/muhammadalifaqsha/broker_informed_noise_filtering_game}.}
Our parameters satisfy the conditions   of Proposition \ref{prop: freiling riccati existence} for existence and uniqueness of a solution. For more details see \ref{app: details existence uniqueness eigenvalues}.

Figure \ref{fig:04_01} shows, for each learning strategy, one sample path for (i) the midprice $S_t$, (ii) the signal $\alpha_t$, (iii) the trading speeds of the broker, the informed, and the uninformed trader, (iv) the cash processes of the informed trader and the broker, and  (vi) the inventory of both the informed trader and the broker.
For each process, we  plot  the $90\%$ confidence bands across simulations. As expected, for both strategies, the trajectory of the inventory processes finish near zero so agents avoid paying the terminal inventory penalties. On the other hand, when the broker employs the alternative learning strategies, her trading speeds resemble those of the informed trader. This is due to the role of the signal that plays into both players' strategies; indeed, in the first strategy, the broker's estimates of the signal are quite poor. We also find that if one were to use the first strategy while employing the alternative estimates of $\alpha$, the resulting strategy will resemble the second strategy.

\begin{figure}[h]
    \centering
    \includegraphics[width=0.85\textwidth]{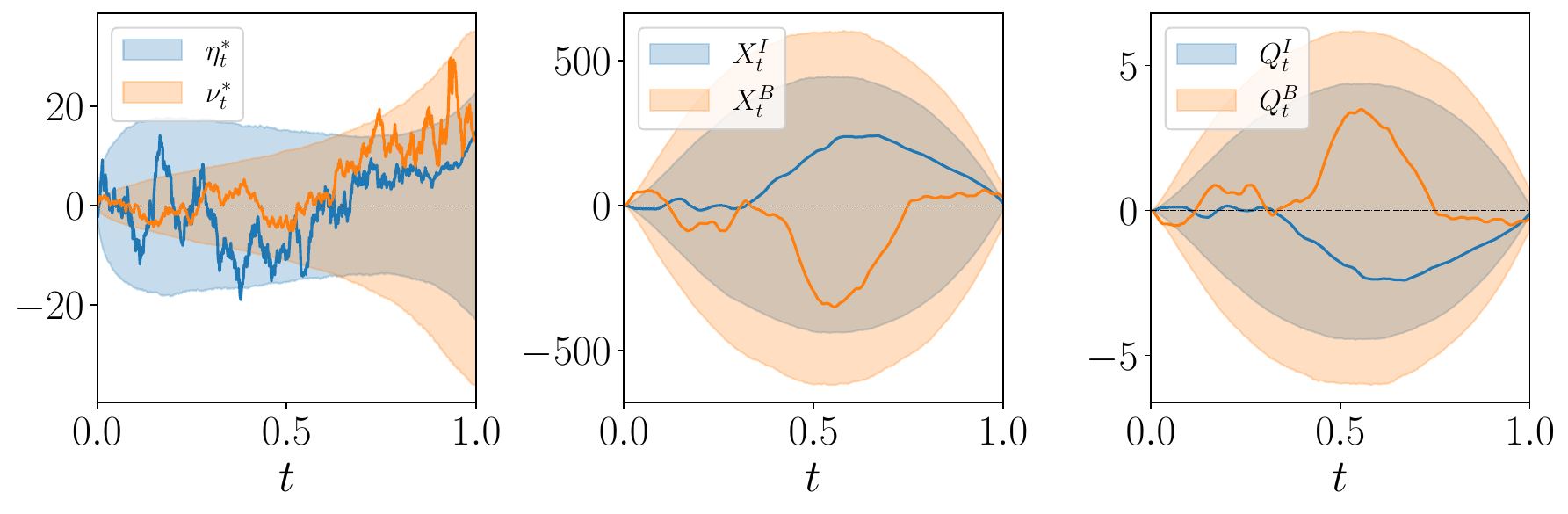}\\
    \includegraphics[width=0.85\textwidth]{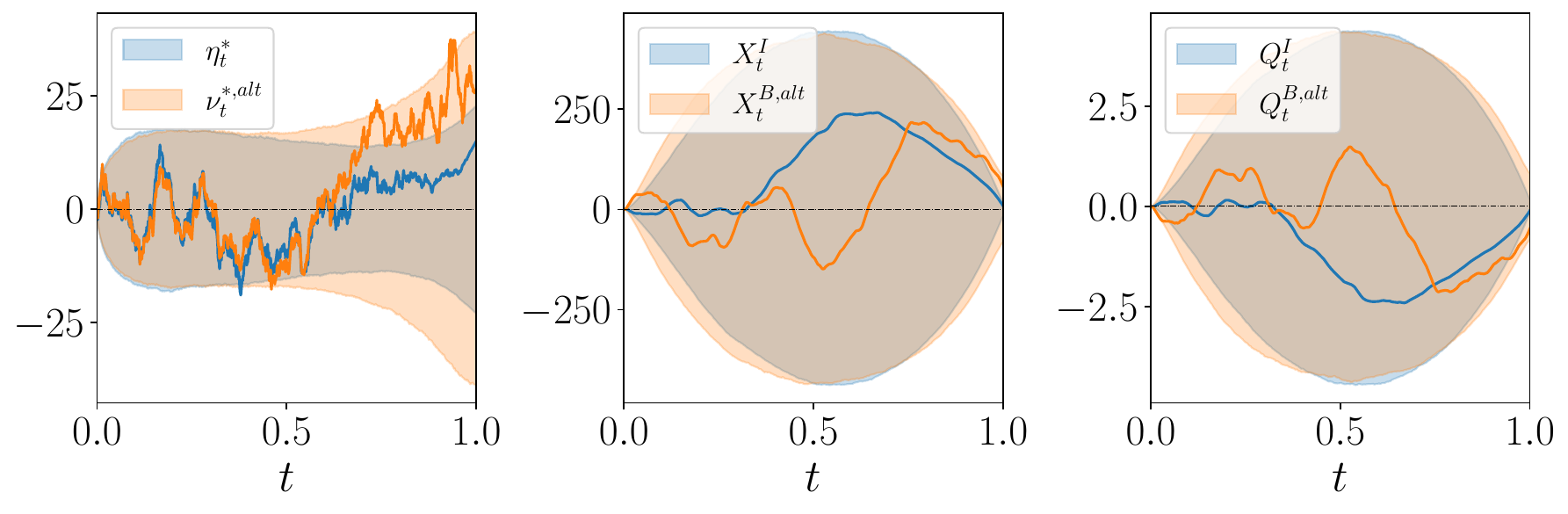}
    \caption{One sample path and $90\%$ confidence bands across simulations for the following processes. The broker's trading speed $\nuB_t$, the trading speed $\nuI_t$ of the informed trader, the cash process $\XI$ of the informed trader, the cash process  $\XB$ of the broker, the inventory $\QI$ of the informed trader, and the inventory $\QB$ of the broker. The top panel is when the broker filters from prices and the bottom panel is when she filters from the informed trader's flow.}
    \label{fig:04_01}
\end{figure}

Next, Figure \ref{fig:04_03} shows the estimates $\hat{\alpha}$ and $\hat{\nuB}$ constructed by the broker and the informed trader, respectively. The top panels are for when the broker filters prices and the bottom panels are for when she filters the informed trader's flows. We plot the  $90\%$ confidence bands across simulations. The bands around the filter are not the conditional variance of the filter, instead, they are the $90\%$ confidence bands for the filtered processes across simulations.

\begin{figure}[H]
    \centering
    \includegraphics[width=0.6\textwidth]{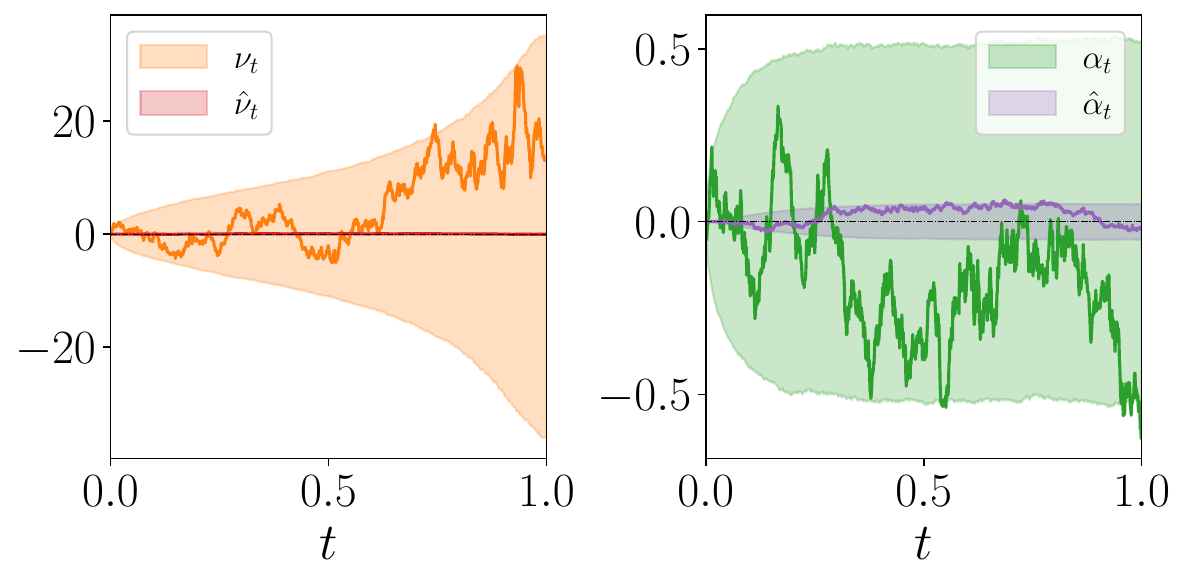}\\
    \includegraphics[width=0.6\textwidth]{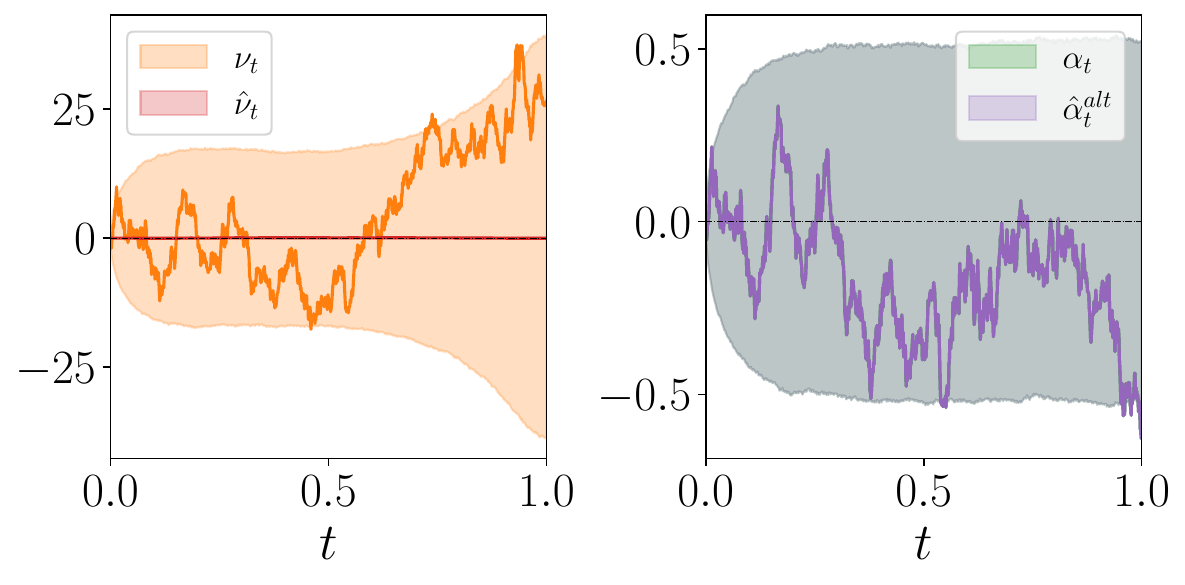}
    \caption{One realisation of filtering carried by both informed trader and broker. The top panel is when the broker filters from prices and the bottom panel is when she filters from the informed trader's flow.}
    \label{fig:04_03}
\end{figure}

For both cases, the informed trader has poor estimates of the broker's activity in the lit market. This is ultimately due to the level of the permanent impact  $\permB$ of the broker's trading activity. On the other hand, the broker has poor estimates of the signal when she filters from prices, but she predicts the signal almost perfectly when she filters from the informed trader's flow. The accuracy of our estimators are in line with practice; in financial markets, both agents face a low signal-to-noise ratio when using prices as a source of information to filter their competitor's private information.

Lastly, Figure \ref{fig:04_09} shows the components of the trading strategy of the broker. The main drivers are (i) inventory management (left panel), (ii) unwinding the uninformed order flow (third panel), and (iii) adjusting to the  inventory level of the informed trader (fourth panel). When the broker is filtering from the informed trader's strategy, she includes an additional component $\gamma$, which is the informed trader's flows after removing his inventory risk aversion component. As shown in the figure, the effect of $\gamma$ is negligible compared to the other four components. This shows that after correctly estimating the signal, the broker does not externalise the informed trader's flow.

\begin{figure}[H]
    \centering
    \includegraphics[width=0.9\textwidth]{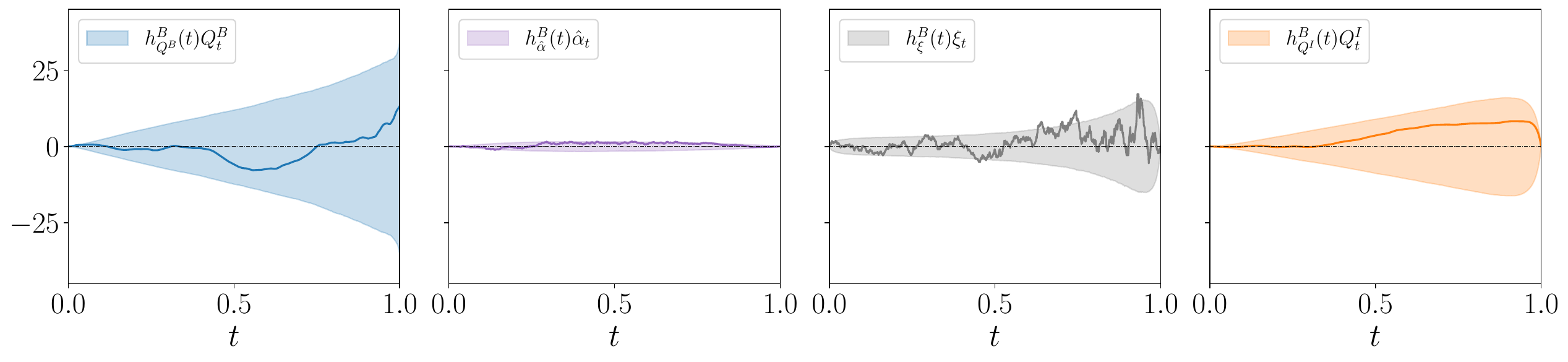}\\
    \includegraphics[width=0.9\textwidth]{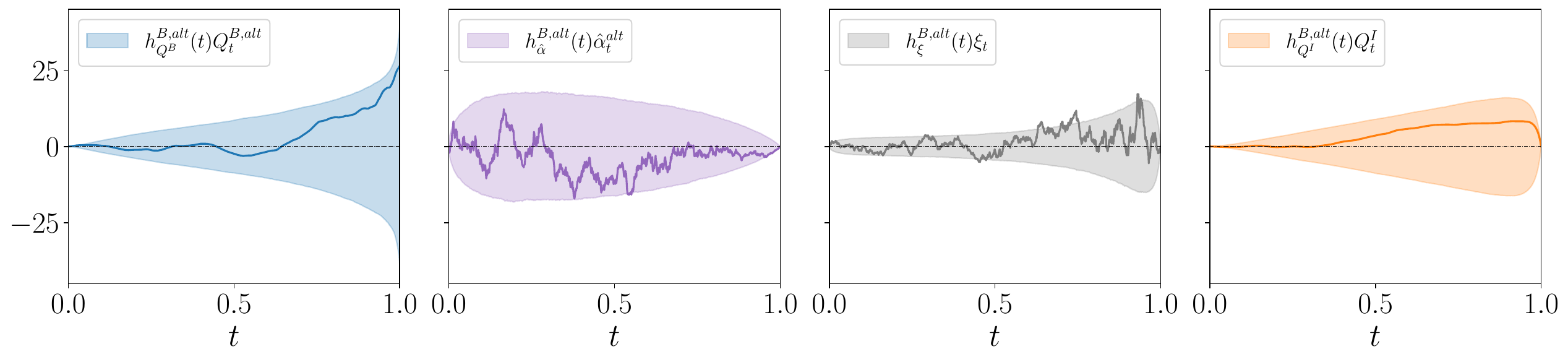}\\
    \includegraphics[width=0.25\textwidth]{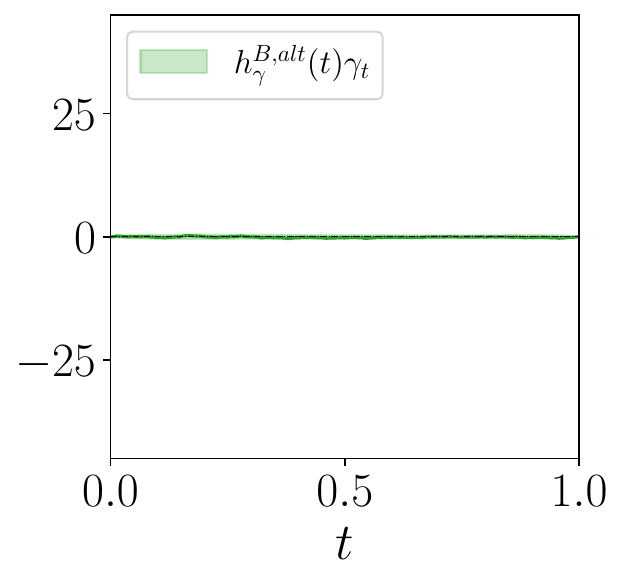}
    \caption{The additive components of the broker's strategy. The function $h_{(\cdot)}^B$ refers to the coefficient of component $(\cdot)$ in the strategy; the superscript ``$\alt$'' refers to the second strategy with the alternative filter. The first row is showing the four additive components when the broker filters from prices. The second and third row are showing the five additive components when the broker filters from the informed trader's flow.}
    \label{fig:04_09}
\end{figure}

Moreover, the last column of Figure \ref{fig:04_09} and the last panel of Figure \ref{fig:04_01} show that the broker trades against the inventory level of the informed trader, in particular, she externalises a significant portion of the informed trading flow systematically.
The second column of Figure \ref{fig:04_09} shows the speculative component of the broker's strategy which employs the estimated signal $\hat{\alpha}$. Observe that $\hat{\alpha}$ does not contribute to the broker's strategy to the same extent as the other three components due to the poor estimation accuracy of the filter. On the other hand, the speculative component from exploiting $\hat{\alpha}^{\text{alt}}$ is more predominant than in the first strategy.

\subsection{Performance and comparison against  benchmarks}

Below, we use $Q_t^{B,(i)}$, for $i\in\{1,2,3\}$, to denote  the inventory of the broker when she follows the benchmark strategy $\nuB_t^{(i)}$.
We employ the following  benchmarks.

\begin{enumerate}[label=(\roman*)]
    \item Benchmark 1: externalise the trading flow of  the informed trader, internalise the trading flow of the uninformed trader, and unwind remaining inventory with TWAP. For this benchmark, the trading speed of the broker is:
    \begin{equation}
        \nuB_t^{(1)} = \nuI_t^{*} - \frac{Q_t^{B,(1)}}{T-t}\,.
    \end{equation}
    \item Benchmark 2: internalise the trading flow of both the informed and uninformed traders, and unwind the broker's inventory with TWAP. For this benchmark, the trading speed of the broker is:
    \begin{equation}
        \nuB_t^{(2)} = - \frac{Q_t^{B,(2)}}{T-t}\,.
    \end{equation}
    \item Benchmark 3: externalise the trading flow of both the informed and uninformed traders. For this benchmark, the trading speed of the broker is:
    \begin{equation}
        \nuB_t^{(3)} = \nuI_t^{*} + \nuU_t\,.
    \end{equation}
    
\end{enumerate}

We compute outperformance,  defined as the additional trading performance from following the strategy  \eqref{eq:01_informed_strategy_feedback_form}, when 
compared to trading according to the benchmark strategies described above. Outperformance is expressed in dollars per million dollars traded. Mathematically, the outperformance compared to the $i$-th benchmark is  expressed as
\begin{equation}
    \widehat{\text{Out}}^{(i)} = \frac{1}{N}\sum_{n=1}^N \left[ \frac{\XBstar_{T,n} + \QBstar_{T,n}\, S_{T,n}^{*} - \left(\XB[B,(i)]_{T,n} + \QB[B, (i)]_{T,n}\, S_{T,n}^{(i)}  \right)}{\int_0^T S_{u,n}^{(i)}\, \left(|\nuB_{u,n}^{(i)}| + |\nuI_{u,n}^{*}| + |\nuU_{u,n}|\right)\, \d u} \right]\times 10^6\,,
\end{equation}
where $N=10,000$ simulations,  $i\in\{1,2,3\}$ refers to the benchmark strategy, and the second sub-index is  the simulation number.
Table \ref{tab:04_01} shows the outperformance for each benchmark together with a $p$-value
for the one-sided $t$-test with $H_0:\, \text{Out}^{(i)}= 0$ against $H_1:\, \text{Out}^{(i)}> 0$.

\begin{table}[H]
\centering
\begin{tabular}{r|r|r}
\noalign{\vskip 1mm}
\hline\hline
\noalign{\vskip 1mm}
$i$ & $\widehat{\text{Out}}^{(i)}$ (std) & $p$-value             \\ 
\noalign{\vskip 1mm}
\hline\hline
\noalign{\vskip 1mm}
1 & 0 (570)             & $> 0.5$               \\
2 & 38 (354)             & $<0.001$ \\ 
3 & 95 (696)             & $<0.001$  \\
\noalign{\vskip 1mm}
\hline\hline
\noalign{\vskip 1mm}
\end{tabular}
\hspace{5em}
\begin{tabular}{r|r|r}
\noalign{\vskip 1mm}
\hline\hline
\noalign{\vskip 1mm}
$i$ & $\widehat{\text{Out}}^{(i)}$ (std) & $p$-value             
\\ 
\noalign{\vskip 1mm}
\hline\hline
\noalign{\vskip 1mm}
1 & 36 (117)             & $<0.001$               \\
2 & 78 (389)             & $<0.001$ \\ 
3 & 127 (518)             & $<0.001$  \\
\noalign{\vskip 1mm}
\hline\hline
\noalign{\vskip 1mm}
\end{tabular}
\caption{Model outperformance with respect to the three benchmarks and their $p$-values from a one sided $t$-test ($H_0:\,\text{Out}^{(i)} = 0$ and $H_1:\, \text{Out}^{(i)} > 0$). The left table is the first strategy when the broker filters from prices; the right table is when she employs the alternative filter.
The raw performance of the optimal strategy employing the first learning method is 0.32 (2.81). For the alternative learning equation, the performance is 0.45 (2.08). The performance for three benchmarks are 0.32 (2.39), 0.19 (2.71), and -0.04 (0.21).
}
\label{tab:04_01}
\end{table}

Due to the low signal-to-noise ratio, the performance of the first strategy is indistinguishable from that of the first benchmark strategy which externalises the informed flow, internalises the uninformed flow, and unwinds inventory linearly. Note that we do not reject the null hypothesis (that the means are the same) in the $t$-test.

On the other hand, the second strategy outperforms all the benchmarks significantly due to better estimates of the signal. We remark that the estimate  \eqref{eq: alpha mean estimate} is only valid if (i) the informed trader carries out his trading exclusively with the broker and (ii) that he does not hide information or mislead the broker. If this is not the case, then the inventory of the informed trader, from the perspective of the broker, is mispecified, and the calculation of $\hat{\alpha}^{\text{alt}}_t$ above blows up at the terminal time $T$ as the limit in \eqref{eq: limit of Z tilde} blows up due to unaccounted inventory differences.

Figure \ref{fig:04_08a} shows a sample path when $\QI[I]_0 \sim \mathcal{N}(0, 1)$ and the broker assumes that $\QI[I]_0=0$. The mispecification of inventory has less impact on the accuracy of the estimate $\hat{\alpha}^{\text{alt}}$ in the beginning of the trading window because the informed trader trades less aggressively at this stage. However, as trading progresses, the accuracy deteriorates. 
Therefore, if the broker employs $\hat{\alpha}^{\text{alt}}$, then she must rely less on the signal less as the trading nears  its close, since inventory mispecification increasingly contaminates the extracted signal the later stages.

\begin{figure}[H]
    \centering
    \includegraphics[width=0.4\textwidth]{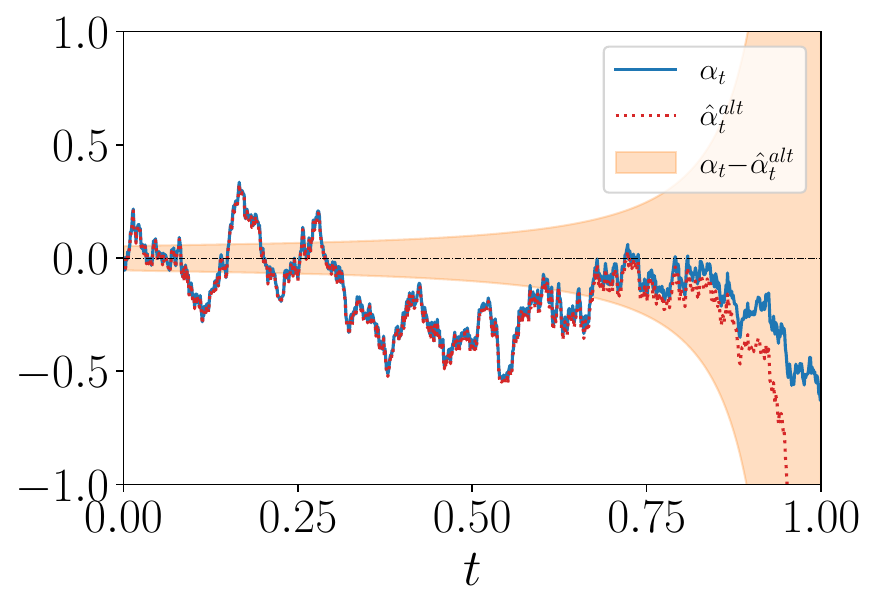}
    \caption{One sample path for the signal $\alpha$ and the new estimates $\hat{\alpha}$ when the broker mispecifies the informed trader's inventory.}
    \label{fig:04_08a}
\end{figure}

\subsubsection{Naive estimate for the signal assuming near-zero values for $\hat \nuB$}

In practice, given that the permanent impact parameter $\permB$ of the broker's trading activity in the market is small, the noise driving price movements obscures the filtered order flow of the broker. Thus, the contributions of the filtered trading speed $\hat{\nuB}$ in the informed trader's strategy are small enough so that they may be assumed to be zero. We illustrate this  in  Figure \ref{fig:04_07} (see middle panel).

\begin{figure}[H]
    \centering
    \includegraphics[width=0.8\textwidth]{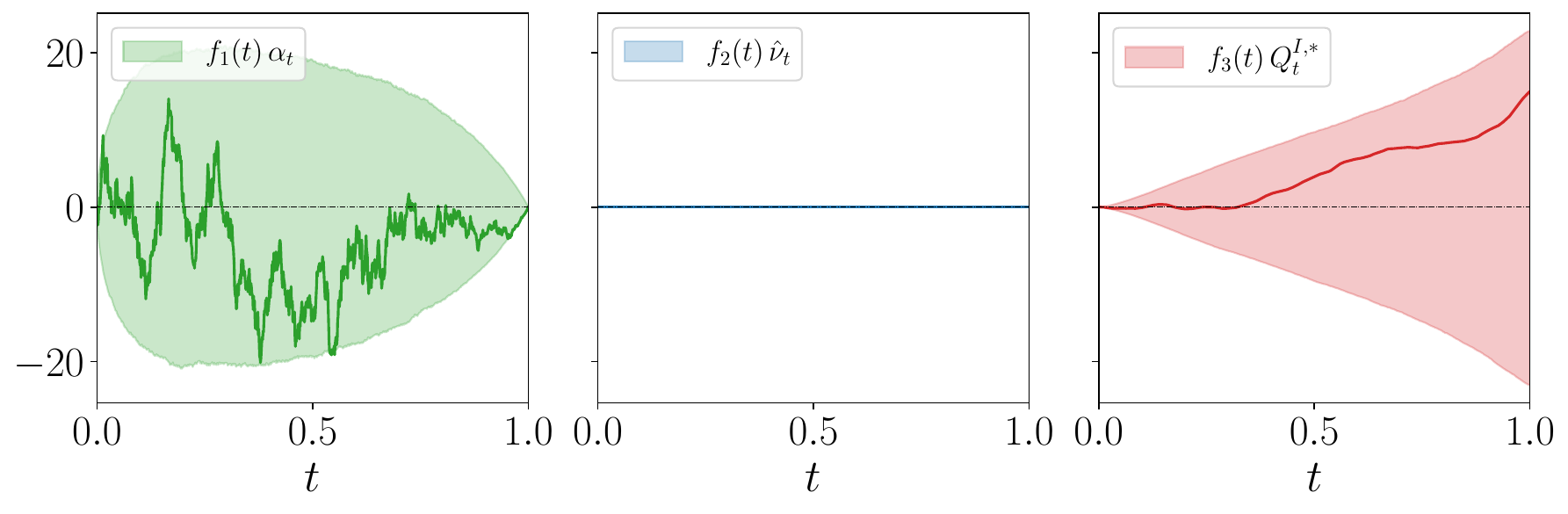}
    \caption{The components of the informed trader's strategy.}
    \label{fig:04_07}
\end{figure}

The broker could obtain a better, near-perfect, estimate for $\alpha$ given that  the contribution of $\hat{\nuB}$ in the informed trader's strategy is near-zero. More precisely, the broker could ignore those contributions (by assuming $f_2(t)\, \hat{\nuB}_t = 0$) and obtain the following alternative estimates for $\alpha$
\begin{equation}
    \hat{\alpha}^{\textup{naive}}_t := \underbrace{ \alpha_t + \frac{f_2(t)}{f_1(t)}\, \hat{\nuB}_t }_{ \approx \alpha_t }= \frac{\nuI_t^{*} - f_3(t)\, Q_t^{I,*}}{f_1(t)}\,.
\end{equation}
The calculation for $\hat{\alpha}^{\textup{naive}}_t$ above always converges as $t\to T$ because $$\lim_{t\to T} \frac{f_2(t)}{f_1(t)} = \lim_{t\to T} \frac{f_2^{'}(t)}{f_1^{'}(t)} = \permB$$ using \eqref{eq:trader_z1}--\eqref{eq:trader_z2}.  Similar to the filter $\hat{\alpha}^{\text{alt}}$, we remark that this estimate is only valid if the informed trader carries out his trading exclusively with the broker. If this is not the case, then the inventory of the informed trader, from the perspective of the broker, is mispecified, and the calculation of $\hat{\alpha}_t$ above blows up at the terminal time $T$.

\subsection{Second-order effects}
In the problem formulation above, both agents act strategically, though the informed trader’s strategy is only strategic to a first order. Specifically, the informed trader does not account for the possibility that the broker may filter his private signal. Conversely, the broker operates under the assumption that the informed trader will incorporate her trading activity to enhance trading performance. In \ref{app: second order} we investigate how to adapt our framework to accommodate these second-order effects.

\subsection{Mispecification of learning parameters}

Next, we examine how mispecification of the value of model parameters affect the outperformance metrics reported above. Using the values reported in Table \ref{tab:04_01} as a baseline, we stress each parameter by increasing and decreasing its value by $50\%$, and report the results in Table \ref{tab:mispecification_broker} when the broker filters from prices and Table \ref{tab:mispecification_broker_alternative} when she filters from the informed trader's flows.

\begin{table}[H]
\centering
\begin{tabular}{r|r|r|r|r|r}
\noalign{\vskip 1mm}
\hline\hline
\noalign{\vskip 1mm}
$i$ & Base & $\kappa^\alpha$ (+50\%) & $\kappa^\alpha$ (-50\%) & $\sigma^\alpha$ (+50\%) & $\sigma^\alpha$ (-50\%)    \\ 
\noalign{\vskip 1mm}
\hline\hline
\noalign{\vskip 1mm}
1 & 0 (570)             & 0 (571) & -1 (570) & -1 (570) & 0 (573)             \\
2 & 38 (354)*              & 38 (354)*  & 37 (362)*  & 38 (362)*  & 38 (355)* \\ 
3 & 95 (696)*              & 95 (697)*  & 95 (695)*  & 95 (695)*  & 95 (697)* \\
\noalign{\vskip 1mm}
\hline\hline
\noalign{\vskip 1mm}
\end{tabular}\\
\begin{tabular}{r|r|r|r|r|r}
\noalign{\vskip 1mm}
\hline\hline
\noalign{\vskip 1mm}
$i$ & Base & $\theta^B$ (+50\%) &  $\theta^B$ (-50\%) & $\sigma^B$ (+50\%) & $\sigma^B$ (-50\%)     \\ 
\noalign{\vskip 1mm}
\hline\hline
\noalign{\vskip 1mm}
1 & 0 (570)             & 0 (569) & 0 (573) & -1 (573) & 0 (566)             \\
2 & 38 (354)*              & 39 (352)*  & 38 (358)*  & 38 (358)*  & 39 (348)* \\ 
3 & 95 (696)*              & 95 (695)*  & 95 (697)*  & 95 (697)*  & 96 (693)* \\
\noalign{\vskip 1mm}
\hline\hline
\noalign{\vskip 1mm}
\end{tabular}
\caption{Outperformance of model to the three benchmarks when the broker mispecified her learning parameters for the first strategy. Significant cases are marked with *.}
\label{tab:mispecification_broker}
\end{table}

\begin{table}[H]
\centering
\begin{tabular}{r|r|r|r|r|r}
\noalign{\vskip 1mm}
\hline\hline
\noalign{\vskip 1mm}
$i$ & Base & $\kappa^\alpha$ (+50\%) & $\kappa^\alpha$ (-50\%) & $\sigma^\alpha$ (+50\%) & $\sigma^\alpha$ (-50\%)    \\ 
\noalign{\vskip 1mm}
\hline\hline
\noalign{\vskip 1mm}
1 & 36 (117)*             & 33 (153)* & 36 (181)* & 35 (200)* & 15 (209)*            \\
2 & 78 (388)*             & 74 (512)*  & 78 (315)*  & 77 (282)*  & 53 (627)*  \\ 
3 & 127 (518)*              & 124 (529)*  & 126 (528)*  & 126 (532)*  & 108 (549)* \\
\noalign{\vskip 1mm}
\hline\hline
\noalign{\vskip 1mm}
\end{tabular}\\
\begin{tabular}{r|r|r|r|r|r}
\noalign{\vskip 1mm}
\hline\hline
\noalign{\vskip 1mm}
$i$ & Base & $\theta^B$ (+50\%) &  $\theta^B$ (-50\%) & $\sigma^B$ (+50\%) & $\sigma^B$ (-50\%)     \\ 
\noalign{\vskip 1mm}
\hline\hline
\noalign{\vskip 1mm}
1 & 36 (117)*             & 36 (119)* & 37 (113)* & 37 (112)* & 36 (122)*             \\
2 & 78 (388)*              & 78 (383)*  & 78 (404)*  & 78 (408)*  & 77 (376)* \\ 
3 & 127 (518)*              & 126 (518)*  & 127 (518)*  & 127 (518)*  & 126 (518)* \\
\noalign{\vskip 1mm}
\hline\hline
\noalign{\vskip 1mm}
\end{tabular}
\caption{Outperformance of model to the three benchmarks when the broker mispecified her learning parameters for the alternative strategy. Significant cases are marked with *.}
\label{tab:mispecification_broker_alternative}
\end{table}

We observe that, even when the broker mispecifies the parameters when filtering from prices, her outperformance levels remain approximately the same as in the baseline. Thus, benchmark 1 remains a strong candidate in this scenario. However, when the broker filters from the informed trader's flows, the optimal strategy outperforms all the benchmark, even with parameter mispecification. Thus, filtering from informed trader's flows remains a strong strategy for the broker.

\subsection{The role of externalisation and stressing model parameters}

In this section, we examine the role of externalisation in the trading strategy of the broker. Additionally, we analyse the impact of changes in model parameters on the externalisation rate. Specifically, we focus on (i) the parameters in the dynamics of the signal and (ii) the parameters of the informed trader's assumption over the broker's trading strategy.

Figure \ref{fig: stressing  externalisation} shows the ``effective'' externalisation quotient for stressed values of model parameters.
To derive this quotient, observe that $\nuBstar_t = f^B_\alpha(t)\,\hat\alpha_t + \epsilon_t$ and similarly, that $\nuIstar_t = f^I_\alpha(t) \,\alpha_t +  \varepsilon_t$ for some $\{\epsilon_t,\varepsilon_t\}$. Thus, a straightforward calculation shows that if $\alpha_t \approx \hat\alpha_t$, then $\nuBstar_t = f^B_\alpha(t) / f^I_\alpha(t) \,\nuIstar_t + \upsilon_t$. In this setup,  the quotient $f^B_\alpha(t) /f^I_\alpha(t)$ represents the effective externalisation rate at which the broker offloads the trading of the informed trader. This quotient is representative of the  externalisation rate in practice when the broker has a good estimate of the private signal.

\begin{figure}[!h]
    \centering
    \includegraphics[width=0.45\textwidth]{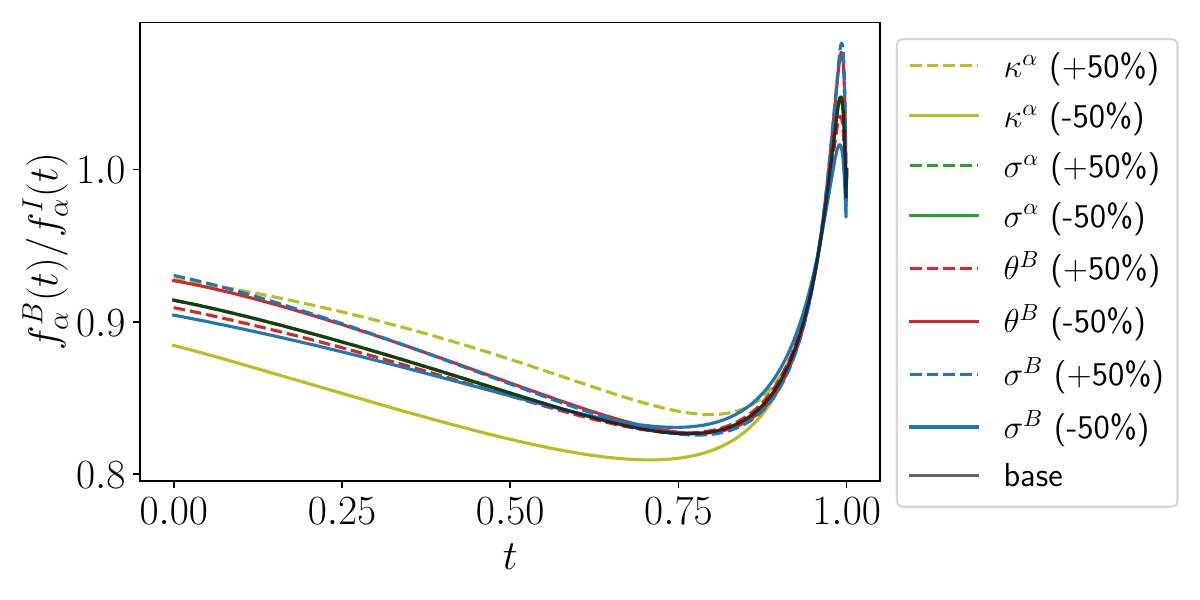}
    \caption{Quotient $f^B_\alpha /f^I_\alpha$ that represents the effective externalisation followed by the broker. This quotient accounts for the speculation by the broker according to her signal estimates.}
    \label{fig: stressing  externalisation}
\end{figure}

Figure \ref{fig: stressing  externalisation}  shows that the mean-reversion parameter $\kappa^\alpha$ in the signal dynamics influences the effective externalisation rate significantly. Large values of $\kappa^\alpha$ result in a signal that diffuses less. As such, the broker trusts her estimates more and increases her speculation part, thus she externalises more. As the signal diffuses less (from zero), its role  in the strategies of both the broker and informed trader diminishes; see Figure \ref{fig: stressing alpha coefficients}. Visually, the signal appears to impact both the broker and informed trader's strategies similarly. However, the broker's  sensitivity to the signal decreases at a faster rate when $\kappa^\alpha$ is low (i.e. when the signal diffuses more). This can be attributed to the broker’s growing uncertainty in her estimate of the signal due to accumulated errors. When $\kappa^\alpha$ is high (i.e. when the signal diffuses less), the broker's  sensitivity to the signal decreases at a more equal rate to the trader. Consequently, the broker increases her externalisation rate without concern for the risks associated with signal misspecification.

\begin{figure}[!h]
    \centering
    \includegraphics[width=0.7\textwidth]{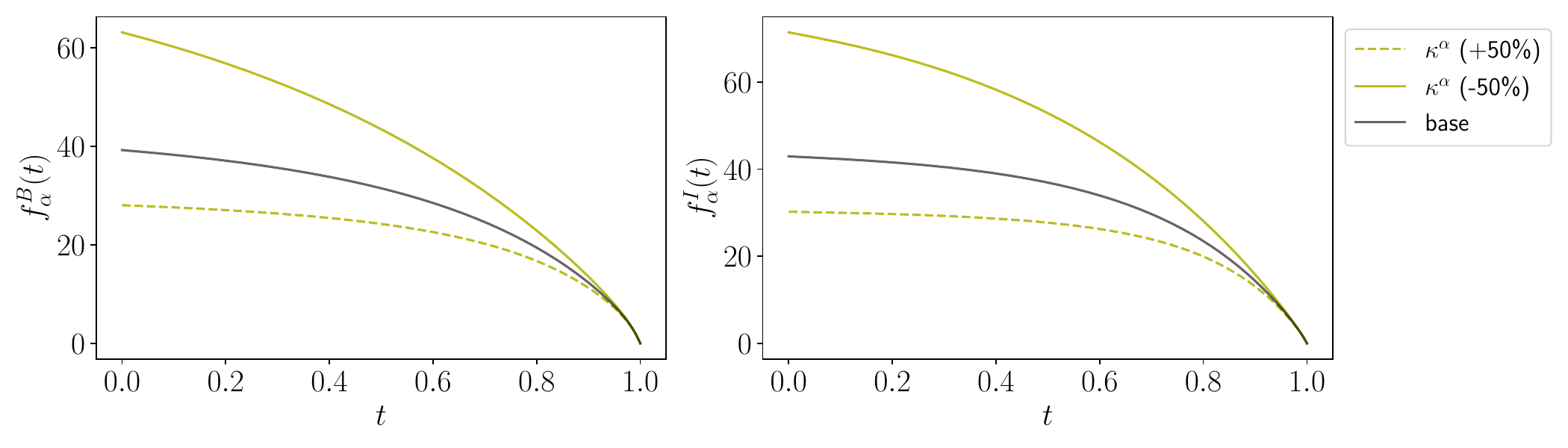}
    \caption{The coefficients entering the estimated signal component in the broker (left panel) and informed trader's strategy (right panel).}
    \label{fig: stressing alpha coefficients}
\end{figure}

Furthermore, when the value of $\sigma^B$ in the informed trader's model \eqref{eq: nuB_according_to_trader} is large (or when $\theta^B$ is small), the informed trader believes that the broker will often trade aggressively. In response,  the broker increases her externalisation rate.
To assess the proportion of the average informed flow that the broker externalises, and following the approach in \cite{bergault2024mean}, we compute the proxy $G(\nuBstar_t) / G(\nuIstar_t)$ of the quotient $\nuBstar_t / \nuIstar_t$. The function $G$ is defined as  $G(x) = \max(x,\epsilon)\,\mathds{1}(x\geq 0) + \min(x, -\epsilon)\,\mathds{1}(x< 0)$, with  $\epsilon = 0.1$. The function $G$ prevents the ratio $\nuBstar_t / \nuIstar_t$ from becoming  undefined or excessively large when $\nuIstar_t$ is close to zero.

\begin{figure}[!h]
    \centering
    \includegraphics[width=0.45\textwidth]{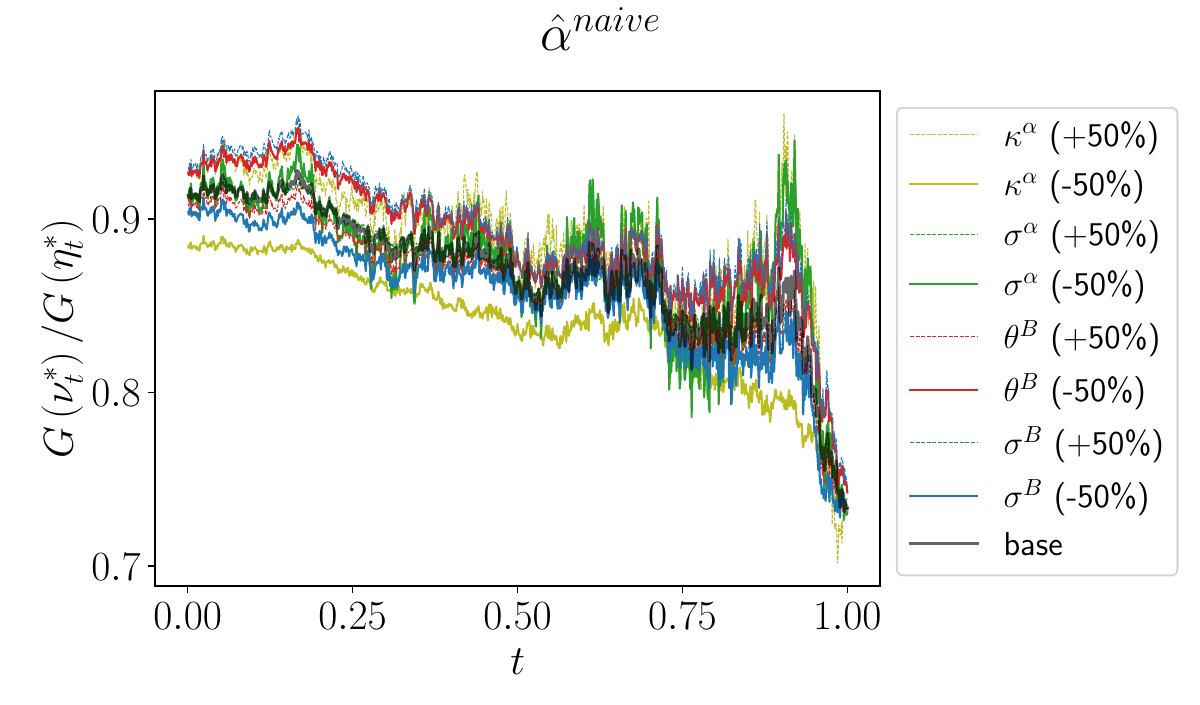}
    \caption{The left panel is for when the broker uses the original estimates $\hat\alpha$, and the right panel is for the naive filter $\hat\alpha^{\text{naive}}$. In both panels we show the median trajectory when stressing the values of model parameters.
    }
    \label{fig: stressing filtering externalisation}
\end{figure}

Figure \ref{fig: stressing filtering externalisation} illustrates that, for the naive case, the externalisation is ranging between $70\%$ and $90\%$. The ordering of the lines when stressing $\kappa^\alpha$ in the right panel of Figure \ref{fig: stressing filtering externalisation} is similar to those in Figure \ref{fig: stressing  externalisation}.

\subsection{Discussion: equilibrium of beliefs}\label{sec: Nash}
In practice, both the broker and the informed trader will adjust their beliefs with time. The broker believes that the signal is characterised by parameters $\kappa^{\alpha,B}$ and $\sigma^{\alpha,B}$ (not necessarily the true parameters $\kappa^\alpha$ and $\sigma^\alpha)$, similarly, the informed trader models the trading speed of the broker as an OU process with parameters $\theta^B$ and $\sigma^B$. 
Similar to Section 4 in \cite{donnelly2025liquidity}, 
one could then look at the equilibrium of beliefs. More precisely, if we $\mathfrak{b}^B = (\kappa^{\alpha,B},\sigma^{\alpha,B})$ and $\mathfrak{b}^I = (\theta^{B},\sigma^{B})$ be the beliefs of the informed trader and the broker respectively, we  could study the performances of each player as they tweak these beliefs in search of an equilibrium point. We envisage that such an exercise would yield equilibrium beliefs $\mathfrak{b}^{B*}$  and $\mathfrak{b}^{I*}$ with the property that (i) the broker's beliefs are close to the true parameters of the signal and (ii) the informed trader's beliefs fit well the sample paths of the equilibrium trading speed of the broker.
This investigation is an interesting avenue for future research.

\section{Conclusions}
In this paper we computed the optimal strategies of 
a broker and an informed trader when  both parties filter the information they do not have. Brokers hold a slight advantage 
over informed traders because informed traders rely
solely on prices to filter information, whereas 
brokers have prices and order flow at their disposal.
We find that when brokers filter the stochastic 
drift $\alpha_t$ from prices, the performance of the optimal strategy 
is indistinguishable from a naive strategy  that externalises the order flow from the informed, internalises the uninformed flow, and unwinds inventory according to a TWAP schedule. On the other hand, if brokers use the order flow to filter the stochastic drift, the outperformance over the benchmarks is in the order of magnitude of transaction costs.

\section*{Funding:}
\noindent AA's research is supported by  the Oxford-Man Institute of Quantitative Finance through the EPSRC Centre for Doctoral Training in Mathematics of Random Systems: Analysis, Modelling and Simulation (ESPRC Grant EP/S023925/1).\\

\section*{Disclosure statement}

The authors have no conflicts of interest to declare that are relevant to the content of this article.

\noindent For the purpose of open access, the authors have applied a CC BY public copyright license to any author accepted manuscript arising from this submission.

\vspace{0.5cm}
\bibliographystyle{apalike}
\bibliography{references}

\newpage
\appendix

\section{Derivations for the control problem of the informed trader}\label{app: derivations informed}
Here we collect the calculations associated with the control problem of the informed trader.

\subsection{Hamilton–Jacobi–Bellman equation} 
The Hamilton–Jacobi–Bellman (HJB) equation that characterises the value function $H^I$ is 
\begin{align}
  0 &= \partial_t H^I - (\rho_0 + \rho_1\, \varI_t) \left(q^I\right)^2  +    \sup_{\nuI}  \big[ (\permB\,\hat{\nuB}  + \alpha)\, \partial_s H^I - \nuI 
  \left( s + \tempI\, \nuI  \right)\, \partial_{x^I} H^I + \nuI\, \partial_{q^I} H^I \big] \nonumber \\
  & \qquad - \kappa^\alpha \alpha\, \partial_\alpha H^I - \theta^B\,\hat{\nuB})\, \partial_{\hat{\nuB}} H^I + \frac{1}{2}(\sigma^S)^2\, \partial_{ss} H^I \nonumber\\
  & \qquad + \permB\, \varI_t\, \partial_{s\hat{\nuB}}H^I +
  \frac{1}{2} (\sigma^{\alpha})^2 \partial_{\alpha\alpha}H^I + \frac{1}{2} \left(\frac{\permB\, \varI_t}{\sigma^S}\right)^2
  \partial_{\hat{\nuB} \hat{\nuB}}H^I \\
  &\qquad +\sigma^S \,\sigma^{\alpha}\, \rho\, \partial_{s\alpha} H^I + \frac{\permB\, \sigma^{\alpha}\, \varI_t\, \rho}{\sigma^S}\, \partial_{\alpha\hat{\nuB}}H^I\,, \label{eq:trader_hjb}
\end{align}
with terminal condition 
$$H^I(T,\stateI)=x^I + q^I\,s - \left(\beta_0+ \beta_1\,\varI_T\right)\left(q^I\right)^2\,.$$ 
The supremum in the Hamiltonian in \eqref{eq:trader_hjb} is attained at
\begin{equation}
\label{eq: informed trader optimal non feedback}
\nuI^* = \frac{\partial_{q^I}H^I - \partial_{x^I}H^I\,s}{2\,\partial_{x^I}H^I\,\tempI} \,.
\end{equation}
After substitution, employ the ansatz $$H^I=x^I + h^I(t,s, q^I,\alpha,\hat{\nuB})\,$$ to obtain that $h^I$ satisfies the partial differential equation (PDE)
\begin{align}
  &0 =  - \left(\rho_0 + \rho_1\, \varI_t\right) \left(q^I\right)^2 + \partial_t h^I  + (\permB
  \hat{\nuB} + \alpha)\, \partial_s h^I + \left(\partial_{q^I}h^I\right)^2 / (4\, \tempI) - \kappa^{\alpha} \,\alpha\, \partial_{\alpha} h^I \nonumber\\
  &\qquad - \theta^B\,  \hat{\nu}\, \partial_{\hat{\nuB}}h^I +
  \frac{1}{2} \left(\sigma^S\right)^2 \partial_{ss}h^I  + \permB\, \varI_t\, \partial_{s
  \hat{\nuB}}h^I + \frac{1}{2} (\sigma^{\alpha})^2 \partial_{\alpha\alpha} h^I\nonumber\\
  &\qquad + \frac{1}{2} \left(\frac{\permB\, \varI_t}{\sigma^S}\right)^2 \partial_{\hat{\nuB}  \hat{\nuB}}h^I + \frac{\permB\, \sigma^{\alpha}\, \varI_t\, \rho}{\sigma^S}\, \partial_{\alpha \hat{\nuB}}h^I \,,
  \nonumber
\end{align}
with terminal condition 
$$h^I(T, s, q^I, \alpha, \hat{\nuB}) =  q^I\, s - (\beta_0 + \beta_1\, \varI_T) \left(q^I\right)^2 \,.$$

Next, use the ansatz $$h^I(t,s, q^I,\alpha,\hat{\nuB}) = q^I\, s + g_0^I  (t, \alpha,
\hat{\nuB}) \, + q^I\, g_1^I  (t, \alpha, \hat{\nuB}) + \left(q^I\right)^2 \, g_2^I (t)$$ to obtain the system of differential equations
\begin{align}
        0 &= - (\rho_0 + \rho_1 \varI_t) + (g_2^I)' + \frac{(g_2^I)^2}{\tempI} \label{eq:trader_sys_1}\\
        0 &= \partial_{t}\, g_1^I + (\permB \hat{\nuB} + \alpha) + \frac{g_1^I\, g_2^I}{2 \tempI} - \kappa^{\alpha}\, \alpha\, \partial_{\alpha}\, g_1^I  - \theta^B\,  \hat{\nuB}\, \partial_{\hat{\nuB}}\, g_1^I + \frac{1}{2}(\sigma^{\alpha})^2 \partial_{\alpha\, \alpha}\, g_1^I \,, \nonumber\\
        &\phantom{{}={}} + \frac{1}{2} \left(\frac{\permB\, \varI_t}{\sigma^S}\right)^2 \partial_{\hat{\nuB}\, \hat{\nuB}}\, g_1^I + \frac{\permB\, \sigma^{\alpha}\, \rho\, \varI_t}{\sigma^S} \partial_{\alpha\, \hat{\nuB}}\, g_1^I \label{eq:trader_sys_2} \,, \\
        0 &= \partial_{t}\, g_0^I + \frac{(g_1^I)^2}{4 \tempI} - \kappa^{\alpha}\, \alpha\, \partial_{\alpha}\, g_0^I - \theta^B\, \hat{\nuB}\, \partial_{\hat{\nuB}}\, g_0^I + \frac{1}{2} (\sigma^{\alpha})^2 \partial_{\alpha\, \alpha}\, g_0^I \nonumber\\
        &\phantom{{}={}} +\frac{1}{2} \left(\frac{\permB\, \varI_t}{\sigma^S}\right)^2 \partial_{\hat{\nuB}\, \hat{\nuB}}\, g_0^I + \frac{\permB\, \sigma^{\alpha}\, \rho\, \varI_t}{\sigma^S} \partial_{\alpha\, \hat{\nuB}}\, g_0^I \,, \label{eq:trader_sys_3}
\end{align}
with terminal conditions 
\begin{align}
        g_2^I(T) \quad &= -(\beta_0^I + \beta_1^I\, \varI_T)\,, \label{eq:trader_sys_1_tv}\\
        g_1^I(T,\alpha,\hat{\nuB}) &= \qquad 0\,,\qquad \label{eq:trader_sys_2_tv}\\
        g_0^I(T,\alpha,\hat{\nuB}) &= \qquad 0\,.\qquad \label{eq:trader_sys_3_tv}
    \end{align}
The following lemma solves the system of PDEs \eqref{eq:trader_sys_1}-\eqref{eq:trader_sys_2}-\eqref{eq:trader_sys_3}.

\begin{lemma}
\label{lemma:trader_DEs_existence_solutions}
There exist $g_2^I, z_i^I\in C^1([0,T]; \mathbb{R})$ for $i=0,\cdots,8$ such that $g_2^I$ and the following functions
    \begin{align}
        g_1^I(t,\alpha,\hat{\nuB}) &:= z_1^I(t)\, \alpha +z_2^I(t)\, \hat{\nuB} \label{eq:trader_g1}\,, \\
        g_0^I(t,\alpha,\hat{\nuB}) &:= z_3^I(t) + z_4(t)\, \alpha + z_5^I(t)\, \hat{\nuB} + z_6^I(t)\, \alpha\, \hat{\nuB} + z_7^I(t)\, \alpha^2 + z_8^I(t)\, \hat{\nuB}^2 \,,\label{eq:trader_g0}
\end{align}
solve the system of differential equations \eqref{eq:trader_sys_3} with terminal conditions \eqref{eq:trader_sys_1_tv}-\eqref{eq:trader_sys_2_tv}-\eqref{eq:trader_sys_3_tv}.
\end{lemma}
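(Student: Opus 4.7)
My plan is to substitute the polynomial ansatz \eqref{eq:trader_g1}--\eqref{eq:trader_g0} into \eqref{eq:trader_sys_1}--\eqref{eq:trader_sys_3} and match coefficients of the monomials $1, \alpha, \hat{\nuB}, \alpha\hat{\nuB}, \alpha^2, \hat{\nuB}^2$, which reduces the PDE system to a cascade of scalar ODEs for $g_2^I$ and the $z_i^I$. I would solve these ODEs backward from $t=T$ in the order $g_2^I \to (z_1^I, z_2^I) \to (z_4^I, z_5^I) \to (z_6^I, z_7^I, z_8^I) \to z_3^I$, so that the forcing term in every later equation is a continuous function of the solutions already constructed.

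The only genuinely nonlinear step is the first one. Equation \eqref{eq:trader_sys_1} decouples from everything else and is the scalar Riccati
\begin{equation}
(g_2^I)'(t) = \rho_0 + \rho_1\,\varI_t - \frac{(g_2^I(t))^2}{\tempI}, \qquad g_2^I(T) = -(\betaI + \betaaI\,\varI_T).
\end{equation}
I would linearise it via $g_2^I = \tempI\,\phi'/\phi$, which turns the Riccati into the second-order linear ODE $\phi'' = \tempI^{-1}(\rho_0 + \rho_1\,\varI_t)\,\phi$ with terminal data $\phi(T)=1$, $\phi'(T)=-(\betaI+\betaaI\,\varI_T)/\tempI$. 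Because $\varI$ is continuous on $[0,T]$ (as the solution of \eqref{eq: varI}), this linear ODE admits a unique global $C^2$ solution. Setting $\tau=T-t$, the initial derivative $\mathrm{d}\phi/\mathrm{d}\tau|_{\tau=0}=(\betaI+\betaaI\,\varI_T)/\tempI>0$ and the non-negativity of the coefficient $\tempI^{-1}(\rho_0+\rho_1\,\varI_{T-\tau})$ imply that $\phi$ is convex as long as it is positive, so a standard comparison yields $\phi(\tau)\ge 1+\tau\,(\betaI+\betaaI\,\varI_T)/\tempI>0$ on $[0,T]$. Hence $g_2^I=\tempI\,\phi'/\phi\in C^1([0,T];\R)$, and finite-time blow-up is ruled out.

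With $g_2^I$ in hand, substituting \eqref{eq:trader_g1} into \eqref{eq:trader_sys_2} and matching the coefficients of $\alpha$ and $\hat{\nuB}$ yields two decoupled linear ODEs for $z_1^I$ and $z_2^I$ with continuous coefficients (involving $g_2^I$) and zero terminal data from \eqref{eq:trader_sys_2_tv}; Picard--Lindel\"of gives unique $C^1$ solutions. Substituting \eqref{eq:trader_g0} into \eqref{eq:trader_sys_3} and collecting the six monomials produces a linear ODE for each $z_i^I$: the nonlinear term $(g_1^I)^2/(4\tempI)$ supplies the forcing $(z_1^I)^2/(4\tempI)$, $z_1^I z_2^I/(2\tempI)$, and $(z_2^I)^2/(4\tempI)$ for $z_7^I$, $z_6^I$, and $z_8^I$, respectively; the coefficients $z_4^I$ and $z_5^I$ solve homogeneous linear equations with zero terminal data and therefore vanish identically; and $z_3^I$ is recovered by a single integration once $z_6^I, z_7^I, z_8^I$ are known. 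All resulting equations are linear with continuous coefficients on $[0,T]$ and hence have unique $C^1$ solutions. A direct substitution then verifies that the triple $(g_2^I, g_1^I, g_0^I)$ so constructed solves \eqref{eq:trader_sys_1}--\eqref{eq:trader_sys_3} with the prescribed terminal conditions.

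The only real obstacle is the Riccati ODE for $g_2^I$, and the linearisation argument above disposes of it; the remaining steps form a cascade of linear Cauchy problems whose well-posedness and $C^1$-regularity are routine.
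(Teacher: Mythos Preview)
Your proof is correct and follows essentially the same route as the paper: substitute the polynomial ansatz, match monomial coefficients, and solve the resulting cascade of linear scalar ODEs (your ordering and the paper's ordering $(z_1^I,z_2^I)\to(z_6^I,z_7^I,z_8^I)\to(z_4^I,z_5^I)\to z_3^I$ both work, since all the forcing terms in the later equations depend only on earlier ones). The only substantive difference is in handling the scalar Riccati for $g_2^I$: the paper dispatches it by citing Lemma~4.1 of \cite{reid1972rde}, whereas you give a self-contained linearisation $g_2^I=\tempI\,\phi'/\phi$ together with a convexity argument ruling out zeros of $\phi$ on $[0,T]$; both arguments are valid, and yours has the mild advantage of being self-contained while also yielding the sign $g_2^I<0$ that the paper uses downstream.
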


\begin{proof}
    Observe that \eqref{eq:trader_sys_1} is a one-dimensional Riccati equation with quadratic term ${1}/{\tempI}>0$, ``constant'' term $-(\runcostI_t)< 0$, and terminal value $-(\termcostI_t)< 0$. Use Lemma 4.1 in \cite{reid1972rde} to obtain that the equation has a unique solution in $(-\infty,T]$, and that it is negative.
    
    Next, to solve  \eqref{eq:trader_sys_2} and \eqref{eq:trader_sys_3}, one may {work} backwards by substituting the desired form of $g_0$ and $g_1$ into the system of equations 
    to obtain
    \begin{alignat}{2}
      0 &= \alpha &&\left[ 1 - \kappa^{\alpha}\, z_1^I + \frac{g_2^I\, z_1^I}{2\, \tempI} + (z_1^I)' \right]\label{eq:trader_z1}\\
      &\phantom{{}={}} + \hat{\nuB} &&\left[\permB - \theta^B z_2^I + \frac{g_2^I\, z_2^I}{2\, \tempI} + (z_2^I)' \right] \label{eq:trader_z2}\,,\\
      0 &= &&\left[\frac{\permB\, \sigma^{\alpha}\, \rho\, \varI_t}{\sigma^S}\, z_6^I + (\sigma^\alpha)^2\, z_7^I + \left(\frac{\permB\, \varI_t}{\sigma^S}\right)^2 z_8^I + (z_3^I)'\right] \nonumber\\
      &\phantom{{}={}} + \hat{\nuB} &&\left[\frac{g_2^I\, z_2^I}{2\, \tempI} - \theta^B\, z_5^I + (z_5^I)' \right] 
      + \alpha \left[ \frac{g_2^I\, z_1^I}{2\, \tempI} - \kappa^\alpha z_4^I + (z_4^I)' \right] \nonumber\\
      &\phantom{{}={}} + \alpha\, \hat{\nuB} &&\left[ \frac{z_1^I\, z_2^I}{2\, \tempI} - \kappa^\alpha\, z_6^I - \theta^B\, z_6^I + (z_6^I)' \right] 
      + \alpha^2 \left[ \frac{(z_1^I)^2}{4\, \tempI} - 2\, \kappa^\alpha\, z_7^I + (z_7^I)' \right] 
      + \hat{\nuB}^2 \left[ \frac{(z_2^I)^2}{4\, \tempI} - 2\, \theta^B\, z_8^I + (z_8^I)'\right]\,.
    \end{alignat}

Next, define a system of ODEs by setting all the terms in the square bracket above to be zero with $z_i(T)=0$ for $i=0,\cdots,8$. One solves the individual ODEs in the following order:
\begin{equation}
      (z_1^I, z_2^I) \to (z_6^I, z_7^I, z_8^I) \to (z_4^I, z_5^I) \to z_3^I\,.
\end{equation}
In the above order, the ODEs are linear, so there exist $\{z_i^I\}_{i=1,\cdots,8}$ satisfying \eqref{eq:trader_sys_2} and \eqref{eq:trader_sys_3}.
\end{proof}

\subsection{Proof of Theorem \ref{thm: solution Informed problem}}
\begin{proof}
    Observe that \eqref{eq:trader_hjb_solution} satisfies the HJB equation in \eqref{eq:trader_hjb}. Moreover, by substituting $g_i^I$ for $i=0,1,2$ from Lemma \ref{lemma:trader_DEs_existence_solutions}, the solution to the HJB equation is a quadratic polynomial in terms of $\stateI$ with time-dependent  $C^1([0,T])$ coefficients; thus, the solution is bounded by a quadratic growth. By Theorem 3.5.2 (Verification Theorem) in \cite{phamcontrol}, the trader's strategy \eqref{eq:01_informed_strategy_feedback_form} is indeed the optimal Markovian control. The last part of the theorem is as follows: $g_2^I$ is negative from  Lemma \ref{lemma:trader_DEs_existence_solutions}.

    Next, by solving for the ODE \eqref{eq:trader_z2}, we obtain
    \begin{equation}
    \label{eq: trader z2 equation}
        z_2^I(t) = \int_t^T \permB\, \exp \left\{ \int_t^s \frac{g_2^I(u)}{2 \,\tempI} - \theta^B\, \d u \right\}\, \d s \geq 0 \quad \forall t\in [0,T]\,,
    \end{equation}
    which shows that $z_2^I$ is non-negative. 
    The same argument follows for $z_1^I$ by solving \eqref{eq:trader_z1}. Observe that the equation above is linear in $\permB$.
\end{proof}

\section{Derivations for the control problem of the broker}\label{app: derivations broker}

\subsection{Filtering the order flow}\label{app: learning from order flow}

We  compute the dynamics of ${\gamma}_t$ to obtain
\begin{equation}
    \begin{split}
        \d {\gamma}_t &= \quad \left[ f_1'(t)\, \alpha_t - \kappa^\alpha\, f_1(t)\, \alpha_t + f_2'(t)\, \hat{\nuB}_t - \theta^B\, f_2(t)\, \hat{\nuB}_t \right]\, \d t + \left[ \sigma^\alpha\, f_1(t)\, \d W_t^\alpha + \frac{f_2(t)}{(\sigma^S)^2}\, \permB\, \varI_t\,  \left( \d Y_t - \permB\, \hat{\nuB}_t \, \d t \right)\right]\\
        &= \quad \Bigg[\Big(f_1'(t) - \kappa^\alpha\, f_1(t)\Big)\, \alpha_t + \left(f_2'(t) - \theta^B\, f_2(t) - \frac{\permB^2\, \varI_t\, f_2(t)}{\left(\sigma^S\right)^2}\right)\, \hat{\nuB}_t + \frac{\permB^2\, \varI_t\, f_2(t)}{\left(\sigma^S\right)^2}\, \nuB_t \Bigg]\, \d t\\
        &\hspace{8cm} +   \sigma^\alpha\, f_1(t)\, \d W_t^\alpha + \frac{\permB\, \varI_t\, f_2(t)}{\sigma^S}\, \d W_t^S.
    \end{split}
\end{equation}
The broker knows that $$\hat{\nuB}_t = \frac{\nuIstar_t - f_1(t)\, \alpha_t - f_3(t)\, \QI[I*]_t}{f_2(t)} = \frac{\gamma_t - f_1(t)\, \alpha_t}{f_2(t)}\,,$$ so it follows that
\begin{equation}
    \begin{split}
        \d {\gamma}_t &= \left[ \mathfrak{G}^\alpha(t)\, \alpha_t + \mathfrak{G}_0(t)\,{\gamma}_t + \mathfrak{G}_1(t)\, \nuB_t + \mathfrak{G}_2(t)\right]\, \d t\\
        &\qquad\qquad\qquad\qquad + \left[ \mathfrak{G}_3(t)\, \d W_t^\alpha +  \mathfrak{G}_4(t)\, \d W_t^S\right]\,,
    \end{split}
\end{equation}
for functions $\mathfrak{G}_i$ with $i\in\{0,1,2,3,4\}$ and $\mathfrak{G}^\alpha$ given by
\begin{align}
    \mathfrak{G}_0(t) &= -\theta^B - \frac{\permB^2\, \varI_t}{\left(\sigma^S\right)^2} + \frac{f_2'(t)}{f_2(t)}\\
    &= - \frac{\permB^2\, \varI_t}{\left(\sigma^S\right)^2} -\frac{ \permB}{2\, f_2(t)\, \tempI} - \frac{f_3(t)}{2}\,,\\
    \mathfrak{G}^\alpha(t) &= f_1'(t) - f_1(t)\, \left(\kappa^\alpha + \mathfrak{G}_0(t)\right)\\
    &= - \frac{1}{2\, \tempI} - f_1(t)\, \left(\frac{f_3(t)}{2}+ \mathfrak{G}_0(t)\right)\,,\\
    \mathfrak{G}_1(t) &= \frac{\permB^2\, \varI_t\, f_2(t)}{\left(\sigma^S\right)^2}, \quad
    \mathfrak{G}_2(t) = 0\,,\\
    \mathfrak{G}_3(t) &= \sigma^\alpha\, f_1(t),\quad \quad
    \mathfrak{G}_4(t) =\frac{\permB\, \varI_t\, f_2(t)}{\sigma^S}\,.
\end{align}
We define the process $\tilde{Z}_t:=\gamma_t/\mathfrak{G}_5(t)$ with $\mathfrak{G}_5(t):= \sqrt{\mathfrak{G}_3(t)^2+\mathfrak{G}_4(t)^2+2\,\rho\,\mathfrak{G}_3(t)\,\mathfrak{G}_4(t)}$ and obtain
\begin{equation}
\label{eq: broker Z process}
    \begin{split}
        \d \tilde{Z}_t &=\quad \left[ \frac{-\tilde{Z}_t\, \mathfrak{G}_5^{'}(t)}{\mathfrak{G}_5(t)} + \frac{\mathfrak{G}^\alpha(t)}{\mathfrak{G}_5(t)}\, \alpha_t + \frac{\mathfrak{G}_0(t)}{\mathfrak{G}_5(t)}\, \gamma_t + \frac{\mathfrak{G}_1(t)}{\mathfrak{G}_5(t)}\, \nuB_t + \frac{\mathfrak{G}_2(t)}{\mathfrak{G}_5(t)}\right]\, \d t  + \frac{1}{\mathfrak{G}_5(t)}\,\left[ \mathfrak{G}_3(t)\, \d W_t^\alpha +  \mathfrak{G}_4(t)\, \d W_t^S\right]\,.
    \end{split}
\end{equation}
The coefficients in the above expression are well-defined when $t \to T$ and we have that 
\begin{equation}
\label{eq: limit of Z tilde}
    \begin{split}
        \lim_{t\uparrow T} \tilde{Z}_t &= \lim_{t\uparrow T} \frac{f_1(t)}{\mathfrak{G}_5(t)}\, \alpha_t + \frac{f_2(t)}{\mathfrak{G}_5(t)}\, \hat{\nuB}_t= \left( \lim_{t\uparrow T} \frac{f_1(t)}{\mathfrak{G}_5(t)} \right)\, \alpha_T + \left( \lim_{t\uparrow T} \frac{f_2(t)}{\mathfrak{G}_5(t)}\right)\, \hat{\nuB}_T\,,
    \end{split}
\end{equation}
where from  L'Hopital's rule and the differential equations \eqref{eq:trader_z1}--\eqref{eq:trader_z2}, it follows that $\lim_{t\uparrow T} {f_1(t)}/{\mathfrak{G}_5(t)}$ and $\lim_{t\uparrow T} {f_2(t)}/{\mathfrak{G}_5(t)}$ exist and are finite. This implies that $\tilde Z$ is a continuous square-integrable process. Similarly, one also shows that the limits of ${\mathfrak{G}^\alpha(t)}/{\mathfrak{G}_5(t)}$, ${\mathfrak{G}_1(t)}/{\mathfrak{G}_5(t)}$, and ${\mathfrak{G}_2(t)}/{\mathfrak{G}_5(t)}$ exist and are finite as $t\uparrow T$. 
On the other hand, observe that ${\mathfrak{G}_5^{'}(t)}/{\mathfrak{G}_5(t)}$ and ${\mathfrak{G}_0(t)}/{\mathfrak{G}_5(t)}$ blow up as $t\uparrow T$. However, because $(\tilde{Z}_t)_\tT$ and $\big( \int_0^t \alpha_s \,{\mathfrak{G}^\alpha(s)}/{\mathfrak{G}_5(s)} + {\mathfrak{G}_1(s)}/{\mathfrak{G}_5(s)}\, \nuB_s + {\mathfrak{G}_2(s)}/{\mathfrak{G}_5(s)}\, \d s  \big)_\tT$ are linear combinations of square-integrable progressively measurable processes, the process $\big( \int_0^t {-\tilde{Z}_s\, \mathfrak{G}_5^{'}(s)}/{\mathfrak{G}_5(s)} + \gamma_s\, {\mathfrak{G}_0(s)}/{\mathfrak{G}_5(s)}\,  \d s\big)_\tT$ is also square-integrable and progressively measurable. 
Next we define the main process that we use in the filtering framework of this section. The broker observes the process $Z_t$ defined as
\begin{equation}
    \begin{split}
        \d Z_t :&= \d \tilde{Z}_t - \mathfrak{f}_t\, \d t= \mathfrak{G}_7(t)\, \alpha_t\, \d t + \d \tilde{W}_t\,,
    \end{split}
\end{equation}
where $\mathfrak{f}_t = \mathfrak{G}_6(t)\, \tilde{Z}_t + \mathfrak{G}_8(t)\, \gamma_t + \mathfrak{G}_9(t)\, \nuB_t + \mathfrak{G}_{10}(t)$ and
\begin{align}
    \d\, \tilde{W}_t &:= \frac{1}{\mathfrak{G}_5(t)}\,\left[ \mathfrak{G}_3(t)\, \d W_t^\alpha +  \mathfrak{G}_4(t)\, \d W_t^S\right]\,,\label{eq: broker innovation process diffusion}\\
    \mathfrak{G}_6(t) &= -\mathfrak{G}_5^{'}(t)/\mathfrak{G}_5(t)\,,\\
    \mathfrak{G}_7(t) &= \mathfrak{G}^\alpha(t)/\mathfrak{G}_5(t),\quad
    \mathfrak{G}_8(t) = \mathfrak{G}_0(t)/\mathfrak{G}_5(t)\,,\\
    \mathfrak{G}_9(t) &= \mathfrak{G}_1(t)/\mathfrak{G}_5(t),\quad
    \mathfrak{G}_{10}(t) = \mathfrak{G}_2(t)/\mathfrak{G}_5(t)\,.
\end{align}
Use the ODEs \eqref{eq:trader_z1}, \eqref{eq:trader_z2}, and \eqref{eq: varI}, to rewrite $\mathfrak{G}_6(t)$ as
\begin{equation}
    \mathfrak{G}_6(t) = - \frac{\mathfrak{G}_3^{'}(t)[\mathfrak{G}_3(t)+\rho\,\mathfrak{G}_4(t)] + \mathfrak{G}_4^{'}(t)[\rho\, \mathfrak{G}_3(t)+\mathfrak{G}_4(t)]}{\mathfrak{G}_5(t)^2}\,,
\end{equation}
where
\begin{align}
    \mathfrak{G}_3^{'}(t) &= \sigma^\alpha \left[ -\frac{1}{2\, \tempI} + \kappa^\alpha\, f_1(t) - \frac{f_3(t)\, f_1(t)}{2}  \right]\,,\\
    \mathfrak{G}_4^{'}(t) &= \frac{\permB}{\sigma^S} \left[ \varI_t\, \left( -\frac{\permB}{2\, \tempI} + \theta^B\, f_2(t) - \frac{f_3(t)\, f_2(t)}{2} \right) + f_2(t)\, \left( (\sigma^B)^2 - 2\, \theta^B\, \varI_t - \frac{\permB^2\, (\varI_t)^2}{(\sigma^S)^2} \right) \right]\,.
\end{align}

\subsection{Proof of Proposition \ref{prop: alternative filter from trader's speed dynamics}}\label{app:proof of proposition}

We need the following lemmas to derive the filtering equations from the dynamics of the informed trader's speed. Recall that the process $Z$ (which is observable) follows the equation
\begin{equation}
    \begin{split}
        \d Z_t &= \mathfrak{G}_7(t)\, \alpha_t\, \d t + \d \tilde{W}_t
    \end{split}
\end{equation}
where the signal $\alpha$ is latent and $\tilde{W}$ is given in the lemma below.
\begin{lemma}
\label{lemma: new brownian motion}
    The process below
    \begin{equation}
    \label{eq: broker innovation process diffusion 1}
        \d\tilde{W}_t:= \frac{1}{\mathfrak{G}_5(t)}\,\left[ \mathfrak{G}_3(t)\, \d W_t^\alpha +  \mathfrak{G}_4(t)\, \d W_t^S\right]
    \end{equation}
    is a $(\Pb, (\mathcal{F}_t)_t)$-Brownian motion. Furthermore, the process below (called the innovation process)
    \begin{equation}
        I_t := Z_t - \int_0^t \E \left[ \mathfrak{G}_7(s)\, \alpha_s \, \Big| \,\mathcal{Z}_s \right]\, \d s
    \end{equation}
    is a $(\Pb, (\mathcal{Z}_t))$-Brownian motion.
\end{lemma}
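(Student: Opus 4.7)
The plan is to prove both parts of Lemma \ref{lemma: new brownian motion} via Lévy's characterisation theorem. For the first claim, I would note that $\tilde{W}$ is by construction a continuous local martingale under $(\Pb,(\mathcal{F}_t))$, since it is a stochastic integral of the bounded, deterministic, continuous processes $\mathfrak{G}_3/\mathfrak{G}_5$ and $\mathfrak{G}_4/\mathfrak{G}_5$ against the Brownian motions $W^\alpha$ and $W^S$. Using $\d\langle W^\alpha\rangle_t = \d\langle W^S\rangle_t = \d t$ and $\d\langle W^\alpha, W^S\rangle_t = \rho\,\d t$, the quadratic variation of $\tilde{W}$ satisfies
\begin{equation}
\d\langle \tilde{W}\rangle_t = \frac{\mathfrak{G}_3(t)^2 + \mathfrak{G}_4(t)^2 + 2\rho\,\mathfrak{G}_3(t)\,\mathfrak{G}_4(t)}{\mathfrak{G}_5(t)^2}\,\d t = \d t,
\end{equation}
by the very definition of $\mathfrak{G}_5$. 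Hence Lévy's theorem delivers that $\tilde{W}$ is an $(\mathcal{F}_t)$-Brownian motion under $\Pb$. A minor preliminary step is to check that $\mathfrak{G}_5(t)$ does not vanish on $[0,T)$, which, up to the delicate behaviour at $t=T$ already addressed in the discussion preceding the lemma, follows from the non-degeneracy of the coefficients entering $\mathfrak{G}_3$ and $\mathfrak{G}_4$.

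For the innovation process $I$, I would again appeal to Lévy's characterisation with respect to $(\mathcal{Z}_t)$. Continuity is immediate. The substantive step is the $(\mathcal{Z}_t)$-martingale property. Writing $Z_t = \int_0^t \mathfrak{G}_7(s)\,\alpha_s\,\d s + \tilde{W}_t$, we obtain
\begin{equation}
I_t = \tilde{W}_t + \int_0^t \left(\mathfrak{G}_7(s)\,\alpha_s - \E[\mathfrak{G}_7(s)\,\alpha_s \mid \mathcal{Z}_s]\right)\,\d s.
\end{equation}
For any $s<t$ and any bounded $\mathcal{Z}_s$-measurable $\xi$, the inclusion $\mathcal{Z}_s\subset \mathcal{F}_s$ combined with the $(\mathcal{F}_t)$-martingale property of $\tilde{W}$ (established in the first step) yields $\E[\xi(\tilde{W}_t - \tilde{W}_s)]=0$. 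For the drift correction, Fubini and the fact that $\xi$ is $\mathcal{Z}_u$-measurable for all $u\geq s$ give
\begin{equation}
\E\!\left[\xi \int_s^t \left(\mathfrak{G}_7(u)\alpha_u - \E[\mathfrak{G}_7(u)\alpha_u \mid \mathcal{Z}_u]\right)\,\d u\right] = \int_s^t \E\!\left[\xi\,\mathfrak{G}_7(u)\alpha_u\right] - \E\!\left[\xi\,\E[\mathfrak{G}_7(u)\alpha_u \mid \mathcal{Z}_u]\right]\,\d u = 0,
\end{equation}
by the tower property. Thus $\E[I_t - I_s \mid \mathcal{Z}_s]=0$, establishing the martingale property.

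Finally, since the bracketed drift correction in the expression for $I_t$ has finite variation, its quadratic variation vanishes, so $\langle I\rangle_t = \langle \tilde{W}\rangle_t = t$ pathwise, and this identity is inherited when we restrict to the subfiltration $(\mathcal{Z}_t)$. Another application of Lévy's theorem concludes that $I$ is a $(\Pb,(\mathcal{Z}_t))$-Brownian motion. The main technical obstacle I anticipate is justifying the Fubini step above and ensuring the projections $\E[\mathfrak{G}_7(u)\alpha_u\mid \mathcal{Z}_u]$ are jointly measurable in $(u,\omega)$ with enough integrability, which is standard for square-integrable signals with bounded deterministic coefficients but requires invoking the integrability of $\alpha$ and of $\mathfrak{G}_7$ on compact subintervals of $[0,T)$, consistently with the boundary analysis near $t=T$ already carried out in Section \ref{sec: alternatives}.
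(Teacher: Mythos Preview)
Your proposal is correct and follows essentially the same approach as the paper: both parts are handled via L\'evy's characterisation, with the quadratic variation of $\tilde{W}$ reducing to $\d t$ by the definition of $\mathfrak{G}_5$, and the $(\mathcal{Z}_t)$-martingale property of $I$ following from the $(\mathcal{F}_t)$-martingale property of $\tilde{W}$ together with the tower property applied to the drift correction. Your presentation is in fact slightly more detailed than the paper's (which dismisses the first part as ``obvious'' and omits the Fubini justification you flag), but the underlying argument is identical.
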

\begin{proof}
    The first part of the lemma is obvious as the Ito integrand are bounded by $1$ and the quadratic variation is $\d t$. Next, we will prove that it is a $(\Pb, (\mathcal{Z}_t))$-martingale. Let $0\leq s\leq t\leq T$. Then
    \begin{equation}
    \label{eq: broker innovation process}
        \begin{split}
            \E \left[ I_t - I_s\, |\, \mathcal{Z}_s\right] &= \E \left[Z_t - Z_s -\int_s^t \E \left[ \mathfrak{G}_7(u)\, \alpha_u\, \Big| \,\mathcal{Z}_u \right]\, \d u\, \Big|\, \mathcal{Z}_s\right]\\
            &=  \E \left[Z_t - Z_s\, | \, \mathcal{Z}_s\right] - \E \left[\int_s^t \mathfrak{G}_7(u)\, \alpha_u\, \d u\, \Big| \,\mathcal{Z}_s\right]\,.
        \end{split}
    \end{equation}
    Note that the diffusion terms in \eqref{eq: broker innovation process diffusion 1} are bounded by $1$, so that 
    $$\E \left[ Z_t - Z_s \, |\, \mathcal{F}_s\right] = \E \left[ \int_s^t  \mathfrak{G}_7(u)\, \alpha_u\, \d u \Big|\, \mathcal{F}_s\right]\,.$$
    But $\mathcal{Z}_s\subset \mathcal{F}_s$, so that $\E [Z_t - Z_s \, | \mathcal{Z}_s] = \E [ \E [Z_t - Z_s\, |\, \mathcal{F}_s] \, |\, \mathcal{Z}_s]$, and we obtain \eqref{eq: broker innovation process} equals zero.

    Next, it can be easily calculated that $\langle I\rangle_t = t$. By Levy Characterisation, $I$ is a $(\Pb, (\mathcal{Z}_t)_t)$-Brownian motion.
\end{proof}
\begin{lemma}
\label{lemma: martingale representation}
    The processes 
    \begin{align}
        M_t &= \E \left[-\int_0^t \kappa^\alpha\, \alpha_u\, \d u + \sigma^\alpha\, W_t^\alpha \, \big|\, \mathcal{Z}_t \right] + \int_0^t \kappa^\alpha\, \E\left[ \alpha_u\, |\, \mathcal{Z}_u\right]\, \d u\,,\\
        N_t &= \E \left[-\int_0^t \kappa^\alpha\, \alpha_u^2\, \d u + \int_0^t \sigma^\alpha\, \alpha_u\, \d W_u^\alpha \, \big|\, \mathcal{Z}_t \right] + \int_0^t \kappa^\alpha\, \E\left[ \alpha_u^2\, |\, \mathcal{Z}_u\right]\, \d u\,
    \end{align}
    are $(\Pb, (\mathcal{Z}_t)_\tT)$-martingales.
\end{lemma}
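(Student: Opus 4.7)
The plan is to reduce both claims to a single general projection principle from nonlinear filtering: if $X$ is an $(\mcF_t)$-semimartingale with decomposition $X_t = X_0 + \int_0^t a_u\,\d u + M^X_t$, where $M^X$ is a true $(\Pb,(\mcF_t))$-martingale and $a$ is progressively measurable with $\E\int_0^T|a_u|\,\d u<\infty$, then for any sub-filtration $(\mathcal{Z}_t)\subseteq(\mcF_t)$ the process $\widehat X_t - X_0 - \int_0^t \widehat a_u\,\d u$ is a $(\Pb,(\mathcal{Z}_t))$-martingale, where $\widehat Y_t := \E[Y_t\,|\,\mathcal{Z}_t]$. This is verified in one line by the tower property and Fubini:
\begin{equation}
\E\!\left[\widehat X_t - \widehat X_s - \int_s^t \widehat a_u\,\d u \,\big|\, \mathcal{Z}_s\right] = \E\!\left[X_t - X_s - \int_s^t a_u\,\d u \,\big|\, \mathcal{Z}_s\right] = \E\!\left[M^X_t - M^X_s\,\big|\,\mathcal{Z}_s\right] = 0,
\end{equation}
where the last step uses $\mathcal{Z}_s\subseteq\mcF_s$ combined with the $(\mcF_t)$-martingale property of $M^X$.

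For $M_t$, I would apply this principle with $X = \alpha$. Integrating the OU dynamics gives $\alpha_t - \alpha_0 = -\int_0^t \kappa^\alpha\,\alpha_u\,\d u + \sigma^\alpha W^\alpha_t$, so the bracketed quantity inside the conditional expectation defining $M_t$ is exactly $\alpha_t - \alpha_0$, with $a_u = -\kappa^\alpha \alpha_u$ and $M^X_t = \sigma^\alpha W^\alpha_t$. The principle then immediately yields that $M_t = \widehat\alpha_t - \alpha_0 + \int_0^t \kappa^\alpha\,\widehat\alpha_u\,\d u$ is a $(\Pb,(\mathcal{Z}_t))$-martingale.

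For $N_t$, I would apply the principle to $X = \alpha^2$. Itô's formula gives $\d(\alpha_u^2) = \bigl(-2\kappa^\alpha\alpha_u^2 + (\sigma^\alpha)^2\bigr)\,\d u + 2\sigma^\alpha \alpha_u\,\d W^\alpha_u$; equivalently, $-\int_0^t\kappa^\alpha\alpha_u^2\,\d u + \int_0^t\sigma^\alpha\alpha_u\,\d W^\alpha_u = \tfrac12(\alpha_t^2-\alpha_0^2) - \tfrac12(\sigma^\alpha)^2 t$, which identifies the bracketed quantity in the definition of $N_t$ as $\tfrac12(\alpha_t^2-\alpha_0^2) - \tfrac12(\sigma^\alpha)^2 t$. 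Applying the general principle with $a_u = -2\kappa^\alpha\alpha_u^2 + (\sigma^\alpha)^2$ and $M^X_t = 2\sigma^\alpha\int_0^t\alpha_u\,\d W^\alpha_u$, and then dividing by two, recovers exactly the expression defining $N_t$ and transfers the $(\mathcal{Z}_t)$-martingale property (deterministic constants and deterministic drift shifts are immaterial).

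The main subtlety is verifying the integrability required for the key step $\E[M^X_t - M^X_s\,|\,\mcF_s]=0$ to hold as a true (not merely local) martingale identity and for the Fubini exchange used in moving $\int_s^t$ outside the conditional expectations. For $X=\alpha$ this is immediate since $\sigma^\alpha W^\alpha$ is a Brownian motion. For $X=\alpha^2$ it suffices to show $\E\int_0^T\alpha_u^2\,\d u<\infty$, which is clear because $\alpha$ is a Gaussian Ornstein–Uhlenbeck process with $\sup_{u\le T}\E[\alpha_u^2]<\infty$ for the given initial condition. Consequently $\int_0^\cdot 2\sigma^\alpha\alpha_u\,\d W^\alpha_u$ is a square-integrable martingale, the conditional expectations in the definitions of $M_t$ and $N_t$ are well defined, and the argument above goes through.
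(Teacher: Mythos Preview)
Your proof is correct and follows essentially the same approach as the paper: both arguments rest on the tower property, Fubini, and the fact that the $(\mcF_t)$-martingale part (the Brownian integral) has zero conditional increment on the smaller filtration $(\mathcal{Z}_t)$. You simply abstract this into a clean projection principle and apply it to $X=\alpha$ and $X=\alpha^2$, whereas the paper carries out the same computation directly for each of $M$ and $N$; the integrability check via $\E\int_0^T\alpha_u^2\,\d u<\infty$ also matches.
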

\begin{proof}
    Observe that for $s\leq t$, and the fact that $W^\alpha$ is a $(\Pb, (\mathcal{F}_t)_\tT)$-Brownian motion, we may obtain the following by using the tower property.
    \begin{align}
        \E \left[ M_t - M_s\, \big|\, \mathcal{Z}_s\right] &= \E \left[-\int_s^t \kappa^\alpha\, \alpha_u\, \d u \, \big|\, \mathcal{Z}_s \right] + \E \left[ \int_s^t \kappa^\alpha\, \E\left[ \alpha_u\, |\, \mathcal{Z}_u\right]\, \d u\, \big|\, \mathcal{Z}_s\right] = 0\,.
    \end{align}
    Additionally, $\E \left[\int_0^T \alpha_u^2\, \d u\right] < \infty$ because $\E\left[\alpha_u^2\right] = \frac{(\sigma^\alpha)^2}{2\, \kappa^\alpha} \exp(-2\, \kappa^\alpha\, u)$. Thus, $\left( \int_0^t \alpha_u\, \d W_u^\alpha \right)_\tT$ is a $(\Pb, (\mathcal{F}_t)_\tT)$-martingale. Using the same trick as before,
    \begin{align}
        \E \left[ N_t - N_s\, \big|\, \mathcal{Z}_s\right] &= \E \left[-\int_s^t \kappa^\alpha\, \alpha_u^2\, \d u \, \big|\, \mathcal{Z}_s \right] + \E \left[ \int_s^t \kappa^\alpha\, \E\left[ \alpha_u^2\, |\, \mathcal{Z}_u\right]\, \d u\, \big|\, \mathcal{Z}_s\right] = 0\,.
    \end{align}
\end{proof}
\begin{proof}[Proof for Proposition \ref{prop: alternative filter from trader's speed dynamics}]
    From Lemma \ref{lemma: martingale representation}, $M$ is a $(\Pb, (\mathcal{Z}_t)_\tT)$-martingale. Because and $I$ is a $(\Pb, (\mathcal{Z}_t)_\tT)$-Brownian motion, by the martingale representation theorem, there exists a $(\mathcal{Z}_t)_\tT$-adapted process $r^M$ such that $M_t = \int_0^t r_u^M\, \d I_u$. We compute
    \begin{equation}
    \label{eq: alpha estimate 1}
        \hat{\alpha}^{\text{alt}}_t = \E \left[-\int_0^t \kappa^\alpha\, \alpha_u\, \d u + \sigma^\alpha\, W_t^\alpha \, \big|\, \mathcal{Z}_t \right] = M_t - \int_0^t \kappa^\alpha\, \hat{\alpha}^{\text{alt}}_u\, \d u = \int_0^t r_u^M \, \d I_u\ - \int_0^t \kappa^\alpha\, \hat{\alpha}^{\text{alt}}_u\, \d u\,.
    \end{equation}
    By applying Ito's formula to $(\hat{\alpha}^{\text{alt}}_t)^2$, we obtain
    \begin{equation}
        \d (\hat{\alpha}^{\text{alt}}_t)^2 = \left[ -2\, \kappa^\alpha\, (\hat{\alpha}^{\text{alt}}_t)^2 + (r_t^M)^2\right]\, \d t + 2\, r_t^M\, \hat{\alpha}^{\text{alt}}_t\, \d I_t\,.
    \end{equation}
    Similarly for $N_t$, there exists a $(\mathcal{Z}_t)_\tT$-adapted process $r^N$ such that $N_t = \int_0^t r_u^N\, \d I_u$. Now, we apply Ito's formula to $\alpha_t^2$ and take its conditional expectation with respect to $\mathcal{Z}_t$.
    \begin{equation}
        \begin{split}
            \E \left[ \alpha_t^2 \, \big|\, \mathcal{Z}_t\right] &= \int_0^t \Big\{ (\sigma^\alpha)^2 - 2\, \kappa^\alpha\, \E \left[ \alpha_u^2 \, \big|\, \mathcal{Z}_t\right] \Big\}\, \d u + \sigma^\alpha\, \E \left[ \int_0^t 2\, \alpha_u\, \d W_u^\alpha \, \big|\, \mathcal{Z}_t\right]\\
            &= \int_0^t 2\, r_u^N\, \d I_u + \int_0^t \Big\{(\sigma^\alpha)^2 - 2\, \kappa^\alpha\, \E \left[ \alpha_u^2 \, \big|\, \mathcal{Z}_u\right]\Big\}\, \d u\,.
        \end{split}
    \end{equation}
    From Lemma \ref{lemma: new brownian motion}, $\tilde{W}$ is a $(\Pb, (\mathcal{F}_t)_\tT)$-Brownian motion. Because the signal $\alpha$ is a $\Pb$-Gaussian process, then $Z$ is also a $\Pb$-Gaussian process. Thus, the projection $\alpha_t$ into $\mathcal{Z}_t$ is also Gaussian. Thus, $\alpha_t - \hat{\alpha}^{\text{alt}}_t$ is independent of $\mathcal{Z}_t$, so that $\varB[\text{alt},B]_t$ must be a deterministic quantity. As such, we should have $r_t^M\, \hat{\alpha}^{\text{alt}}_t = r_u^N$, and
    \begin{equation}
    \label{eq: alpha conditional variance 1}
        \d \varB[\text{alt},B]_t = \Big\{(\sigma^\alpha)^2 - 2\, \kappa^\alpha\, \varB[\text{alt},B]_t - (r_t^M)^2\Big\}\, \d t\,.
    \end{equation}

    Now, we wish to determine the process $r^M$. Define $\epsilon_t := \exp \left( i \, \int_0^t r(u)\, \d Z_u + \frac{1}{2}\int_0^t |r(u)|^2\, du\right)$ where $r: [0,T]\to \Real$ is measurable and bounded. By Ito's formula, $\d \epsilon_t = i\, \epsilon_t\, r(t)\, \d Z_t$. We calculate
    \begin{align}
        \d \left( \epsilon\, \alpha\right)_t &= \epsilon_t\, \d \alpha_t + \alpha_t\, \d \epsilon_t + \d\langle \epsilon, \alpha \rangle_t\\
        &= -\kappa^\alpha\, \epsilon_t\, \alpha_t\, \d t + \sigma^\alpha\, \epsilon_t\, \d W_t^\alpha + i\, \epsilon_t\, \alpha_t\, r(t)\, \d Z_t + i\, \sigma^\alpha\, \epsilon_t\, r(t)\, \mathfrak{K}(t)\, \d t\label{eq: finding rM 1}\,,\\
        \d \left(\epsilon\, \hat{\alpha}^{\text{alt}}\right)_t &= \epsilon_t\, \d \hat{\alpha}^{\text{alt}}_t + \hat{\alpha_t}^{\text{alt}}\, \d \epsilon_t + \d\langle \epsilon, \hat{\alpha}^{\text{alt}} \rangle_t\\
        &= -\kappa^\alpha\, \epsilon_t\, \hat{\alpha}^{\text{alt}}_t\, \d t + \epsilon_t\, r^M_t\, \d I_t + i\, \epsilon_t\, r(t)\, \hat{\alpha}^{\text{alt}}_t\, \d Z_t + i\, \epsilon_t\, r(t)\, r_t^M\, \d t\label{eq: finding rM 2}\,,
    \end{align}
    where 
    $$ \mathfrak{K}(t) = \frac{\left(\mathfrak{G}_3(t) + \rho\, \mathfrak{G}_4(t)\right) }{\mathfrak{G}_5(t)}\,.
    $$
    Note that by applying the tower property, $\E \left[ \epsilon_t (\hat{\alpha}^{\text{alt}}_t - \alpha_t)\right] = 0$. By subtracting \eqref{eq: finding rM 1} from \eqref{eq: finding rM 2} and taking the total expectation, we have
    \begin{equation}
    \label{eq: finding Ito integrand}
        \begin{split}
            0 &=  \E \left[\int_0^t  i\, \epsilon_u\, r(u)\, \left(\hat{\alpha}^{\text{alt}}_u - \alpha_u \right)\, \d Z_u \right] + \int_0^t i\, r(u)\, \E \left[ \epsilon_u\, (r_u^M - \sigma^\alpha\, \mathfrak{K}(u))\right]\, \d u\\
            &= \int_0^t i\, r(u)\, \E \left[ \epsilon_u\, \left\{r_u^M - \sigma^\alpha\, \mathfrak{K}(u) +  (\hat{\alpha}^{\text{alt}}_u - \alpha_u)\, \mathfrak{G}_7(u)\, \alpha_u\right\}\right]\, \d u +  \E \left[\int_0^t  i\, \epsilon_u\, r(u)\, \left(\hat{\alpha}^{\text{alt}}_u - \alpha_u \right)\, \d \tilde{W}_u \right]\,.
        \end{split}
    \end{equation}
    We know that $\varB[\text{alt},B]_t \leq \text{Var}(\alpha_t) = \frac{(\sigma^\alpha)^2}{2\, \kappa^\alpha}\, \exp \left(-2\, \kappa^\alpha\, t \right)$, so that $\E \left[ \int_0^T \left(\hat{\alpha}^{\text{alt}}_u - \alpha_u \right)^2\, \d u\right] < \infty$. Thus, the rightmost term in \eqref{eq: finding Ito integrand} equals zero. We obtain
    $$ 0 = \int_0^t i\, r(u)\, \E \left[ \epsilon_u\, \left\{r_u^M - \sigma^\alpha\, \mathfrak{K}(u) +  (\hat{\alpha}^{\text{alt}}_u - \alpha_u)\, \mathfrak{G}_7(u)\, \alpha_u\right\}\right]\, \d u, \quad \forall \tT,$$
    and consequently,
    $$ \E \left[ \epsilon_t\, \left\{r_t^M - \sigma^\alpha\, \mathfrak{K}(t) +  (\hat{\alpha}^{\text{alt}}_t - \alpha_t)\, \mathfrak{G}_7(t)\, \alpha_t\right\}\right] = 0, \quad \forall \tT.$$
    This equation applies for all $r\in L^\infty([0,T];\Real)$. Using Lemma B.39 in \cite{filtering_alain_crisan}, we have $\E \left[r_t^M - \sigma^\alpha\, \mathfrak{K}(t) +  (\hat{\alpha}^{\text{alt}}_t - \alpha_t)\, \mathfrak{G}_7(t)\, \alpha_t\, \big| \, \mathcal{Z}_t\right] = 0$ almost surely. Thus, we obtain
    \begin{equation}
        r_t^M = \sigma^\alpha\, \mathfrak{K}(t) + \mathfrak{G}_7(t)\, \varB[\text{alt},B]_t\,.
    \end{equation}
    Substituting this into \eqref{eq: alpha estimate 1} and \eqref{eq: alpha conditional variance 1}, we obtain \eqref{eq: alpha mean estimate} and \eqref{eq: alpha var estimate} in Proposition \ref{prop: alternative filter from trader's speed dynamics}.
\end{proof}

\subsection{Hamilton–Jacobi–Bellman equation} 
The HJB equation associated with the value function $H$ in \eqref{eq: broker control problem} is 
\begin{align}
0 = \partial_t H + \sup_{\nuB} \bigg\{&
   \left( \permB\, \nuB + \hat{\alpha} \right)\,\partial_s H +\frac{1}{2}\, \left(\sigma^S\right)^2\, \partial_{ss} H \\
  & - \kappa^\alpha\,\hat{\alpha}\, \partial_{\hat{\alpha}} H + \frac{1}{2}\,\left(\frac{\varB_t + \rho\,\sigma^S\,\sigma^\alpha}{\sigma^S} \right)^2\,\partial_{\hat{\alpha},\hat{\alpha}} H + \left(\varB_t + \rho\,\sigma^S\,\sigma^\alpha\right)\, \partial_{\hat{\alpha},s} H\\
    &  -\nuB\, \left(  s + \tempB\, \nuB\right)\,\partial_{x^B} H + \nuIstar\, \left(  s + \tempI\, \nuIstar\right) \,\partial_{x^B} H + \nuU\, \left(  s + \tempU\, \nuU\right) \,\partial_{x^B} H \\
& + \left(\nuB-\nuIstar-\nuU\right)\, \partial_{q^B} H\\
& + \left(\nuIstar\right)\, \partial_{q^I} H  - \kappa^U\,\nuU\,\partial_\nuU H + \frac12 \, \left(\sigma^U\right)^2 \, \partial_{\nuU\nuU} H  -\left(\rho_0^B + \rho_1^B\,\varB_t\right) \left(q^B\right)^2
\bigg\}\,, \label{eq: broker HJB}
\end{align}
with terminal condition $$H(T,\stateB) = x^B + q^B\,s - \left(\beta^B_0 +\beta^B_1\,\varB_T\right)\,\left(q^B\right)^2\,,$$ 
where 
\begin{equation}\label{eq: nuIstar for HJB broker}
    \nuIstar = \frac{z_1^I(t)\,\hat{\alpha} + z_2^I(t)\,\nuB  + 2\,g_2^I (t)\,q^I }{2\,\tempI} \,.
\end{equation}
Next, define $$f_1(t) = z_1^I(t)/(2\, \tempI)\,, \quad f_2(t) = z_2^I(t)/(2\,\tempI)\,,\quad \text{and } f_3(t) =g_2^I(t)/\tempI\,,$$ 
and employ the ansatz 
$$H = x^B + h(t,s,q^B,\hat{\alpha},\nuU,q^I)$$ 
to obtain the following  PDE in $h$
\begin{equation} \label{eq:broker pde h}
     \begin{split}
        0 &= \partial_t h + \partial_s h\,\hat{\alpha} +  \nuU \left(  s + \tempU\, \nuU\right) - \partial_{q^B} h\, \nuU + \partial_{\hat{\alpha}} h\left(-\kappa^\alpha\hat{\alpha}\right) + \partial_{\nuU}h\left(-\kappa^U\,\nuU\right)\\
        &\phantom{{}={}} + \frac{1}{2}\partial_{ss} h\left(\sigma^S\right)^2 + \frac{1}{2} \partial_{\hat{\alpha}\hat{\alpha}}h \left(\frac{\varB_t + \rho\,\sigma^S \sigma^\alpha}{\sigma^S}\right)^2 + \frac{1}{2} \partial_{\nuU, \nuU}h\left(\sigma^U\right)^2\\
		&\phantom{{}={}} + \partial_{s\hat{\alpha}}h\left[\varB_t + \rho\,\sigma^S\,\sigma^\alpha\right] - (\rho_0^B + \rho_1^B\,\varB_t)\,\left(q^B\right)^2\\
        &\phantom{{}={}} + \left[ \hat{\alpha}\, f_1 + f_3\, q^I  \right]\partial_{q^I}h - \left[ \hat{\alpha}\, f_1 + f_3\, q^I \right]\, \partial_{q^B}h\\
        &\phantom{{}={}} + \Bigg[  \hat{\alpha}^2\, f_1^2\, \tempI  + 2\, \hat{\alpha}\, f_1\, f_3\, \tempI\, q^I + f_3^2\, \tempI\, \left(q^I\right)^2  + \hat{\alpha}\, f_1\, s + f_3\, q^I\, s \Bigg]\\
        &\phantom{{}={}} + \sup_{\nuB} \Bigg\{ -\left( \nuB \right)^2 \left( \tempB - f_2^2\, \tempI \right) + \nuB\bigg[ f_2\, \partial_{q^I}h + \left( 1 - f_2 \right)\, \partial_{q^B}h \\
        &\phantom{{}={}}\qquad  + \left(  2\, \hat{\alpha}\, f_1\, f_2\, \tempI + 2\, f_2\, f_3\, \tempI\, q^I - s + f_2\, s \right)\, + \permB\, \partial_s h \bigg] \Bigg\}\,,
     \end{split}
\end{equation}
with terminal condition $$h(T,s,q^B, \hat{\alpha}, \nuU, q^I) = q^B\,s - \left(\beta^B_0 +\beta^B_1\,\varB_T\right)\,\left(q^B\right)^2\,.$$
The supremum in \eqref{eq:broker pde h} is attained at 
\begin{equation}
    \nuBstar = \frac{f_2\, \partial_{q^I}h + \left( 1 - f_2 \right)\, \partial_{q^B}h + \left(  2\, \hat{\alpha}\, f_1\, f_2\, \tempI + 2\, f_2\, f_3\, \tempI\, q^I - s + f_2\, s \right) + \permB\, \partial_s h}{2\, \left(\tempB - f_2^2\, \tempI\right)}\,,
\end{equation}
because $\left(\tempB - f_2^2\, \tempI\right) > 0$, so the function $h$ satisfies the PDE
\begin{equation}\label{eq:broker hjb h 2}
     \begin{split}
        0 &= \partial_t h + \partial_s h\, \hat{\alpha} + \nuU\,\left(  s + \tempU\, \nuU\right) - \partial_{q^B} h\,\nuU + \partial_{\hat{\alpha}}h\left[-\kappa^\alpha\,\hat{\alpha}\right] + \partial_{\nuU}h\left[-\kappa^U\,\nuU\right]\\
        &\phantom{{}={}} + \frac{1}{2}\partial_{ss}h\left(\sigma^S\right)^2 + \frac{1}{2} \partial_{\hat{\alpha}\hat{\alpha}}h \left(\frac{\varB_t + \rho\,\sigma^S\,\sigma^\alpha}{\sigma^S}\right)^2 + \frac{1}{2} \partial_{\nuU, \nuU}h\left(\sigma^U\right)^2\\
		&\phantom{{}={}} + \partial_{s\hat{\alpha}}h \left[\varB_t + \rho\,\sigma^S \sigma^\alpha\right] - (\rho_0^B + \rho_1^B \,\varB_t)\,q^2\\
        &\phantom{{}={}} + \left[\hat{\alpha}\, f_1 + f_3\, q^I  \right]\partial_{q^I}h - \left[ \hat{\alpha}\, f_1 + f_3\, q^I \right]\, \partial_{q^B}h\\
        &\phantom{{}={}} + \Bigg[ \hat{\alpha}^2\, f_1^2\, \tempI + 2\, \hat{\alpha}\,f_1\, f_3\, \tempI\, q^I + f_3^2\, \tempI\, \left(q^I\right)^2  + \hat{\alpha}\, f_1\, s + f_3\, q^I\, s \Bigg]\\
        &\phantom{{}={}} + \frac{\left[f_2\, 
        \partial_{q^I}h + \left( 1 - f_2 \right)\, \partial_{q^B}h + \left( 2\, \hat{\alpha}\, f_1\, f_2\, \tempI + 2\, f_2\, f_3\, \tempI\, q^I - s + f_2\, s \right) + \permB\, \partial_s h\right]^2}{4\, \left(\tempB - f_2^2\, \tempI\right)}\,.
     \end{split}
\end{equation}

\paragraph{\textbf{The HJB equation in matrix form}} 
To study the problem further, we show that the PDE \eqref{eq:broker hjb h 2} is a system of matrix differential equations involving a matrix Riccati ODE. Employ the ansatz
\begin{equation}
    h(t, s, \tilde{\stateB}) = q^B\, s + G_0 + 2\, G_1\, \tilde{\stateB}^\top + \tilde{\stateB}\, G_2\, \tilde{\stateB}^\top
\end{equation}
where $\tilde{\stateB} = \begin{pmatrix} q^B & \hat{\alpha} & \nuU & q^I \end{pmatrix}$, $G_0:[0,T]\to \Real$, $G_1:[0,T]\to \Real^{1\times 4}$, and $G_2:[0,T]\to \Real_\text{symmetric}^{4\times 4}$ are functions of time, and write the HJB equation \eqref{eq:broker hjb h 2} as
\begin{alignat}{2}
    0 &= &&\Bigg[ G_0' + 4\, G_1\, P_8^\top\, P_8\, G_1^\top + \left(\frac{\varB_t + \rho\,\sigma^S\,\sigma^\alpha}{\sigma^S}\right)^2\, e_2\, G_2\, e_2^\top + (\sigmaU)^2\, e_3\, G_2\, e_3^\top\Bigg]\\
    &\phantom{{}={}} + 2\, &&\Bigg[G_1' + G_1\, P_2^\top + 2\, G_1\, P_8^\top\, P_7  + 4\, G_1\, P_8^\top\, P_8\, G_2 \Bigg]\, \tilde{\stateB}^\top\\
    &\phantom{{}={}} + \tilde{\stateB}\, &&\Bigg[G_2' + 2\, P_2\, G_2 + P_7^\top\, P_7 + 4\, G_2^\top\, P_8^\top\, P_8\, G_2 + 4\, G_2^\top\, P_8^\top\, P_7 + P_5 \Bigg]\, \tilde{\stateB}^\top\,,
\end{alignat}
where
\begin{align}
    P_2 &=
    \left(
    \begin{array}{ccccc}
     0 & 0 & 0 & 0  \\
     -{f_1} & -{\kappa^\alpha} & 0 & {f_1}  \\
     -1 & 0 & -{\kappa^U} & 0  \\
     -{f_3} & 0 & 0 & {f_3}  \\
    \end{array}
    \right)\,,
    \\[0.3em]
    P_5 &= 
    \left(
    \begin{array}{ccccc}
     -{\varB} {\rho_1^B}-{\rho_0^B} & 1/2 & 0 & 0 \\
     1/2 & {f_1}^2 {\tempI} & 0 & {f_1} {f_3} {\tempI} \\
     0 & 0 & {\tempU} & 0 & \\
     0 & {f_1} {f_3} {\tempI} & 0 & {f_3}^2 {\tempI} & \\
    \end{array}
    \right)\,,\\
    P_7 &=
    \left(
    \begin{array}{ccccc}
     \frac{\permB}{2 \sqrt{{\tempB}-{f_2}^2 {\tempI}}} & \frac{{f_1} {f_2} {\tempI}}{\sqrt{{\tempB}-{f_2}^2 {\tempI}}} & 0 & \frac{{f_2} {f_3} {\tempI}}{\sqrt{{\tempB}-{f_2}^2 {\tempI}}}  \\
    \end{array}
    \right)\,,\label{eq: P7}\\[1em]
    P_8 &=
    \left(
    \begin{array}{ccccc}
     \frac{1-{f_2}}{2 \sqrt{{\tempB}-{f_2}^2 {\tempI}}} & 0 & 0 & \frac{{f_2}}{2 \sqrt{{\tempB}-{f_2}^2 {\tempI}}} \\
    \end{array}
    \right)\,,\label{eq: P8}
\end{align}
and $\{e^i\}$ is the canonical basis of $\mathbb R^4$. Recall that $\tilde{\stateB}\, M\, \tilde{\stateB}^\top = \tilde{\stateB}\, M^\top\, \tilde{\stateB}^\top$ for any four-by-four matrix $M$ and define
\begin{equation}
    P_9 = 2\, P_8^\top\, P_7 + P_2^\top\,, 
\end{equation}
to obtain the matrix differential equations
\begin{alignat}{2}
    0 &= &&\, G_0' + 4\, G_1\, P_8^\top\, P_8\, G_1^\top + \left(\frac{\varB_t + \rho\,\sigma^S\,\sigma^\alpha}{\sigma^S}\right)^2\, e_2\, G_2\, e_2^\top + (\sigmaU)^2\, e_3\, G_2\, e_3^\top \,, \label{eq: broker G0}\\
    0 &=&&\,G_1' + G_1\, P_2^\top + 2\, G_1\, P_8^\top\, P_7 + 4\, G_1\, P_8^\top\, P_8\, G_2 \,, \label{eq: broker G1}\\
    0 &= &&\,G_2' + P_7^\top\, P_7 + 4\, G_2^\top\, P_8^\top\, P_8\, G_2 + G_2^\top\, P_9 + P_9^\top\, G_2 + P_5\,, \label{eq: broker riccati}
\end{alignat}
with terminal conditions $$G_0 = 0\,, \quad G_1=0\,,$$
and
\begin{equation}\label{eq:term cond G2}
    G_2(T) = \begin{pmatrix}
        -(\termcostB_T) & 0 & 0 & 0\\
        0 & 0 & 0 & 0\\
        0 & 0 & 0 & 0\\
        0 & 0 & 0 & 0
    \end{pmatrix}\,.
\end{equation}

\subsection{Proof of Proposition \ref{prop: freiling riccati existence}}

\begin{proof}
    The  matrix Riccati equation in \eqref{eq: broker riccati} can be simplified because the components of $P_8^\top\, P_8$ are zero except in the corners, and the first and last columns of $P_9$ take the form $\begin{pmatrix} \cdots & 0 & 0 & \cdots \end{pmatrix}^\top$. The simplified differential equation satisfied by $\tilde{G}_2$  is
    \begin{equation}
    \label{eq: broker riccati reduced}
        \begin{split}
            0 = \tilde{G}_2' + 4\, \tilde{G}_2^\top\, U\, \tilde{G}_2 + \tilde{G}_2^\top\, V + V^\top\, \tilde{G}_2 + B\,, \quad
            \tilde{G}_2(T) = \begin{pmatrix}
            -(\termcostB_T) & 0 \\
            0 & 0
        \end{pmatrix}\,,
        \end{split}
    \end{equation}
    with
    \begin{equation}
    \begin{split}    
        \tilde{G}_2 &=
        \begin{pmatrix}
            (G_2)_{11} & (G_2)_{14}\\
            (G_2)_{14} & (G_2)_{44}
        \end{pmatrix}\,,
        \\
        U &= \frac{1}{\tempB-f_2^2 \tempI}\,
        \begin{pmatrix}
            (1-f_2)^2 & f_2\,(1-f_2)\\
            f_2\,(1-f_2) & f_2^2
        \end{pmatrix}
        = \begin{pmatrix}
            1-f_2\\
            f_2
        \end{pmatrix}\,
        \left(
            \tempB-f_2^2 \tempI
        \right)^{-1}\,
        \begin{pmatrix}
            1-f_2 & f_2
        \end{pmatrix}\,,
        \\
        V &= \frac{1}{\tempB-f_2^2 \tempI}\,
        \begin{pmatrix}
            \frac{\permB\, (1-f_2)}{2} & -f_3\, (\tempB - f_2\, \tempI)\\
            \frac{\permB\, f_2}{2} & \tempB\, f_3
        \end{pmatrix}\,,
        \\
        B &= \frac{1}{\tempB-f_2^2 \tempI}\,
        \begin{pmatrix}
            \frac{\permB^2}{4} - \left(\constrB \right)\left(\runcostB_t\right) & \frac{\permB\,f_2\,f_3\, \tempI}{2}\\
            \frac{\permB\,f_2\,f_3\, \tempI}{2} & f_3^2\, \tempB\, \tempI
        \end{pmatrix}\,.
    \end{split}
    \end{equation}
Recall that $z_2^I$ satisfies \eqref{eq: trader z2 equation}, so 
    $f_2(t) = \permB\, \int_t^T \exp \left\{ \int_t^s \frac{f_3(u)}{2} - \theta^B\, \d u \right\}\, \d s$, $f_2 = 0$ whenever $\permB=0$, and $f_2$ is continuous in $\permB$.  
\noindent Next, use Theorem 2.3 in \cite{freiling2000non} by setting $$C = \begin{pmatrix} 0 & 0\\ 0 & 1 \end{pmatrix}\,, \qquad \displaystyle D = -\begin{pmatrix} 1 & 0\\ 0 & 1\end{pmatrix}\,,$$ and consider the following matrices:
    \begin{align}
        B_{11} &= V\,,\quad B_{12} = U\,,\quad B_{21} = -B^\top\,,\quad B_{22} = -V^\top\,,\\
        E &= C + D\, \tilde{G}_2(T) + \left(D\,\tilde{G}_2(T)\right)^\top\,,\\
        L(t) &= \begin{pmatrix}
            C\, B_{11} + D\, B_{21} & C\, B_{12} + B_{11}^\top\, D + D\, B_{22}\\
            {0} & B_{12}^\top\, D
        \end{pmatrix}\,\\
        &= \begin{pmatrix}
            C\, V + B^\top & C\, U\\
            {0} & -U^\top
        \end{pmatrix}\,.
    \end{align}
    It follows that
    \begin{equation}
            E = \begin{pmatrix}
            2\, (\termcostB_t) & 0\\
            0 & 1
            \end{pmatrix} > 0\,,\quad \text{and}\quad 
            L + L^\top = \begin{pmatrix}
            C\, V + V^\top\, C + 2\,B & C\, U\\
            U\, C & -2\,U
            \end{pmatrix}\,.
    \end{equation}
    Observe that $\ell(t) := \begin{pmatrix} -f_2(t) & 1 - f_2(t)\end{pmatrix}^\top$ is orthogonal to $\begin{pmatrix} 1 - f_2(t) & f_2(t)\end{pmatrix}^\top$, so that $U(t)\, \ell(t) = 0$. Because the last two columns of $L + L^\top$, which are comprised of $C\, U$ and $-2\, U$, are linearly dependent, then $\det(L(t) + L(t)^\top) = 0$ for all $t\in[0,T]$. Thus, $L + L^\top$ has at least one zero eigenvalue. If $\permB = 0$, then $f_2 = 0$. It turns out that for this parameter, $L + L^\top$ is a diagonal matrix with eigenvalues $\left\{ -2\, (\runcostB_t),\, 2\, f_3(t)\, (1 + \tempI\, f_3(t)),\, -2/\tempB,\, 0 \right\}$.  Given that $g_2^I<0$, it follows that  $f_3(t)< 0$, and we assumed that $1 + \tempI\, f_3(t) > 0$ for all $t\in [0,T]$, thus,  the first three eigenvalues are negative. Next, if we perturb $\permB$ slightly, the eigenvalues  only change slightly because of their continuity in $\permB$; thus, the first three eigenvalues stay negative. Furthermore, given that $\det(L + L^\top) = 0$, the last eigenvalue remains zero. Consequently, there exists $\delta>0$ such that for all $\permB\in [0,\delta)$, the matrix $L(t) + L(t)^\top$ is negative semidefinite for all $t\in [0,T]$. From Theorem 2.3 of \cite{freiling2000non}, the solution for the matrix Riccati equation \eqref{eq: broker riccati reduced} exists for all $t\in [0, T]$. 
\end{proof}

\subsection{{Proof of Theorem \ref{thm: solution problem broker}}}\label{app: proof of theorem broker}
\begin{proof}
By Proposition \ref{prop: freiling riccati existence} there is a unique solution to the matrix Riccati differential equation \eqref{eq: broker riccati}. Armed with $G_2$,  we  find the unique solution $G_0$ to the linear differential equation \eqref{eq: broker G0}. With $G_0$ and $G_2$, we obtain the following solution to the HJB equation \eqref{eq: broker HJB}:
    \begin{equation}
         H^B(t, \stateB) = x^B + q^B\, s + G_0(t) + \tilde{\stateB}\, G_2(t)\, \tilde{\stateB}^\top\,.
    \end{equation}
 We observe that $H^B$ is a quadratic polynomial in $\stateB$ with time-dependent $C^1([0,T], \Real)$ coefficients; thus, it is bounded by a quadratic growth. By Theorem 3.5.2 in \cite{phamcontrol}, the broker's strategy \eqref{eq:broker_optimal_trading_rate_feedback} is indeed her optimal Markovian control.
\end{proof}

\subsection{Existence and uniqueness details}\label{app: details existence uniqueness eigenvalues}

The parameters we use satisfy the conditions for existence and uniqueness of the Riccati equation.
Figure \ref{fig:freiling_riccati_eigenvals} shows the evolution of the first three eigenvalues of $L(t) + L(t)^\top$. Observe that the first three eigenvalues are negative, while the fourth eigenvalue is  zero, as required in the proof of Proposition \ref{prop: freiling riccati existence}.

\begin{figure}[!h]
    \centering
    \includegraphics[width=0.8\textwidth]{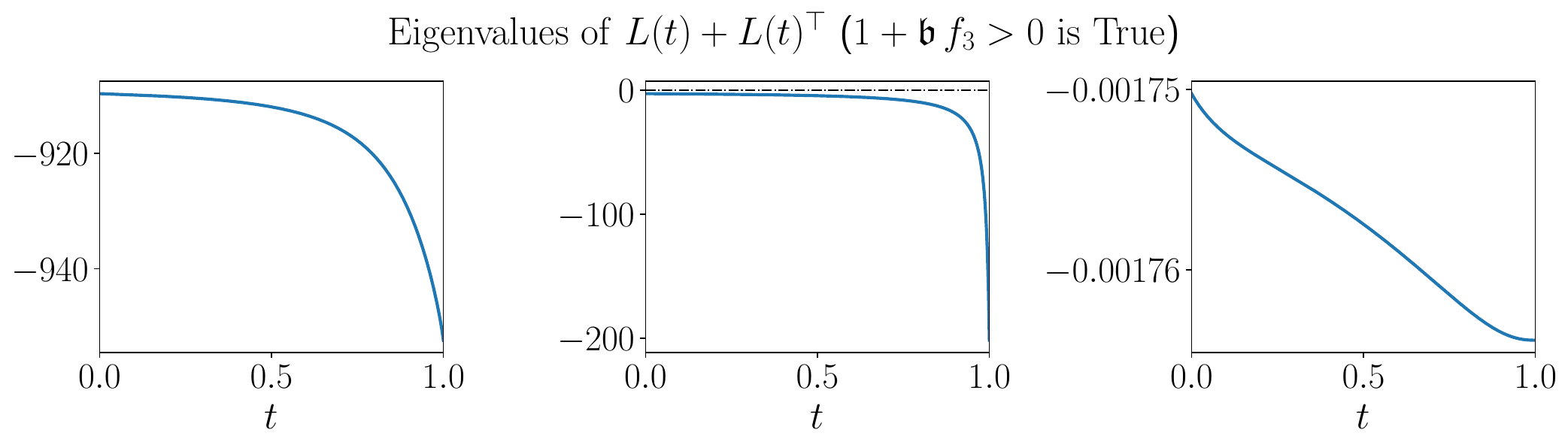}
    \caption{The first three leading eigenvalues of $L(t) + L(t)^\top$. The fourth eigenvalue is zero.}
    \label{fig:freiling_riccati_eigenvals}
\end{figure}

\subsection{Hamiltonian-Jacobi-Bellman equation for the alternative filter}
\label{calculations for hjb with alternative filter}
In the alternative filter, we have $\gamma_t = \nuIstar_t - f_3(t)\, \QI_t$; the broker observes an additional process $\gamma$. We use similar criterion in \eqref{eq: broker control problem} by replacing $\varB$ with $\varBalt$ and taking $\stateB = (s, x^B, q^B, q^I, \gamma, \hat{\alpha}, \nuU)$. After taking the ansatz $H(t, \stateB) = x^B + q^B\, s + h(t, \tilde{\stateB})$ where $\tilde{\stateB} = (q^B, q^I, \gamma, \hat{\alpha}, \nuU)$, we have the following HJB equation,
\begin{equation}
\label{eq: hjb equation alternative filter}
    \begin{split}
        0 = \partial_t  h + \sup_\nu \Bigg\{ & -\tempB\, \nuB^2 + \tempI\, \gamma^2 + 2\, \tempI\, f_3\, q^I\, \gamma + \tempI\, f_3^2\, (q^I)^2 + \tempU\, \nuU^2 + \permB\, \nuB\, q^B + q^B\, \hat{\alpha} - (\runcostBalt)\, (q^B)^2\\
        & + \partial_{q^B} h\, (\nuB - \gamma - f_3\, q^I - \nuU) + \partial_{q^I} h \, (\gamma + f_3\, q^I)\\
        & + \partial_{\gamma} h\, \left[ \mathfrak{G}^\alpha\, \hat{\alpha} + \mathfrak{G}_0\, \gamma + \mathfrak{G}_1\, \nuB + \mathfrak{G}_2 \right] + \frac{1}{2}\, \partial_{\gamma \gamma} h\, \mathfrak{G}_5^2 + \partial_{\gamma \hat{\alpha}} h\, \mathfrak{G}_5\, \left[ \mathfrak{G}_7\, \varBalt + \sigma^\alpha\, \mathfrak{K}\right]\\
        & -  \kappa^\alpha\, \partial_{\hat{\alpha}} h\, \hat{\alpha} + \frac{1}{2}\, \partial_{\hat{\alpha} \hat{\alpha}} h\, \left[ \mathfrak{G}_7\, \varBalt + \sigma^\alpha\, \mathfrak{K} \right]^2 - \kappa^U\, \partial_{\nuU} h\, \nuU + \frac{1}{2} (\sigma^U)^2\, \partial_{\nuU \nuU} h \Bigg\}\,
    \end{split}
\end{equation}
with $h(T, \tilde \stateB) = -(\termcostBalt_T)\, (q^B)^2$. One obtains
\[ \nuBstar = \frac{\partial_{q^B} h + \mathfrak{G}_1\, \partial_{\gamma} h + \permB\, q^B}{2\, \tempB}\,. \]
We take the ansatz $$h(t, \tilde{\stateB}) = h_{0,0}(t) + 2\, H_{0,1}(t)\, \tilde \eta^
\top + \tilde \eta\, H_2(t)\, \tilde \eta^\top\,,$$ where $h_{i,j}: [0,T] \to \R$ is a function for all $0\leq i \leq j \leq 5$, and
\begin{align}
    H_{0,1}(t) &:= \begin{pmatrix}
        h_{0,1}(t) & h_{0,2}(t) & h_{0,3}(t) & h_{0,4}(t) & h_{0,5}(t)
    \end{pmatrix}\\
    H_{1,1}(t) &:= \begin{pmatrix}
        h_{1,1}(t) & h_{1,2}(t) & h_{1,3}(t)\\
        h_{1,2}(t) & h_{2,2}(t) & h_{2,3}(t)\\
        h_{1,3}(t) & h_{2,3}(t) & h_{3,3}(t)
    \end{pmatrix}\,,\\
    H_{1,4}(t) &:= \begin{pmatrix}
        h_{1,4}(t) & h_{2,4}(t) & h_{3,4}(t)
    \end{pmatrix}^\top\,,\\
    H_{1,5}(t) &:= \begin{pmatrix}
        h_{1,5}(t) & h_{2,5}(t) & h_{3,5}(t)
    \end{pmatrix}^\top\,,\\
    H_{4,4}(t) &:= \begin{pmatrix}
        h_{4,4}(t) & h_{4,5}(t)\\
        h_{4,5}(t) & h_{5,5}(t)
    \end{pmatrix}\,,\\
    H_2(t) &:= \begin{pmatrix}
        H_{1,1}(t) & \begin{matrix} H_{1,4}(t) & H_{1,5}(t)\end{matrix}\\
        \begin{matrix} H_{1,4}(t)^\top \\ H_{1,5}(t)^\top \end{matrix} & H_{4,4}(t)
    \end{pmatrix}\,.
\end{align}
Substituting $\nuBstar$ and the ansatz, we obtain the following HJB equation in matrix form.
\begin{align}
    0 = \qquad \begin{pmatrix}
        q^B & q^I & \gamma
    \end{pmatrix}
    &\Bigg[ V_0(t) + V_1(t)\, H_{1,1}(t) + H_{1,1}(t)\, V_1(t)^\top + H_{1,1}(t)\, V_2\,(t)\, H_{1,1}(t) + H_{1,1}^{'}(t) \Bigg]\,
    \begin{pmatrix}
        q^B \\ q^I \\ \gamma
    \end{pmatrix}\\
    \quad +\, 2\, \hat \alpha\, \begin{pmatrix}
        q^B & q^I & \gamma
    \end{pmatrix}\, &\Bigg[ V_3(t) + V_4(t)\, H_{1,4}(t) + H_{1,4}'(t) \Bigg]\\
    \quad +\, 2\, \nuU\, \begin{pmatrix}
        q^B & q^I & \gamma
    \end{pmatrix}\, &\Bigg[ V_5(t) + V_6(t)\, H_{1,5}(t) + H_{1,5}'(t) \Bigg]\\
    \quad +\, \begin{pmatrix}
        \hat\alpha & \nuU
    \end{pmatrix}\, &\Bigg[ V_7(t) + V_8(t)\, H_{4,4}(t) + H_{4,4}(t)\, V_8(t)^\top +  H_{4,4}'(t) \Bigg]\, \begin{pmatrix}
        \hat\alpha \\ \nuU
    \end{pmatrix}\\
    \quad +\, 2\, \tilde \stateB \, &\Bigg[ U_0(t) + U_1(t)\, H_{0,1}(t)^\top + H_{0,1}'(t)^\top \Bigg]\\
    \quad +\,  &\Bigg[ U_2(t) + h_{0,0}'(t)  \Bigg]\,,
\end{align}
where
\begin{align}
    V_0(t) &= \begin{pmatrix}
    \frac{{\permB}^2}{4 \,{\tempB}}-{\rhoB}-{\rhooB}\, {\varBalt}(t) & 0 & 0 \\
    0 & {\tempI}\, {f_3}(t)^2 & {\tempI}\, {f_3}(t) \\
    0 & {\tempI}\, {f_3}(t) & {\tempI}
    \end{pmatrix}\,,\\
    V_1(t) &= \begin{pmatrix}
    \frac{{\permB}}{2\, {\tempB}} & 0 & \frac{{\permB}\, {\GF_1}(t)}{2 {\tempB}} \\
    -{f_3}(t) & {f_3}(t) & 0 \\
    -1 & 1 & {\GF_0}(t) 
    \end{pmatrix}\,,\\
    V_2(t) &= \begin{pmatrix}
    \frac{1}{{\tempB}} & 0 & \frac{{\GF_1}(t)}{{\tempB}} \\
    0 & 0 & 0 \\
    \frac{{\GF_1}(t)}{{\tempB}} & 0 & \frac{{\GF_1}(t)^2}{{\tempB}}
    \end{pmatrix}\,,\\
    V_3(t) &= \begin{pmatrix}
        \frac{1}{2} + \GF^\alpha(t)\, h_{1,3}(t) & \GF^\alpha(t)\, h_{2,3}(t) & \GF^\alpha(t)\, h_{3,3}(t)
    \end{pmatrix}\,,\\
    V_4(t) &=
    \begin{pmatrix}
     \frac{2 {\GF_1}(t) {h_{1,3}}(t)+2 {h_{1,1}}(t)-2 {\kappa^\alpha} {\tempB}+{\permB}}{2 {\tempB}} & 0 & \frac{{\GF_1}(t) (2 {\GF_1}(t) {h_{1,3}}(t)+2 {h_{1,1}}(t)+{\permB})}{2 {\tempB}} \\
     \frac{-{\tempB} {f_3}(t)+{\GF_1}(t) {h_{2,3}}(t)+{h_{1,2}}(t)}{{\tempB}} & {f_3}(t)-{\kappa^\alpha} & \frac{{\GF_1}(t) ({\GF_1}(t) {h_{2,3}}(t)+{h_{1,2}}(t))}{{\tempB}} \\
     \frac{{\GF_1}(t) {h_{3,3}}(t)+{h_{1,3}}(t)-{\tempB}}{{\tempB}} & 1 & \frac{{\tempB} {\GF_0}(t)+{\GF_1}(t) {h_{1,3}}(t)+{\GF_1}(t)^2 {h_{3,3}}(t)-{\kappa^\alpha} {\tempB}}{{\tempB}}
    \end{pmatrix}\,,\\
    V_5(t) &= -\begin{pmatrix}
        h_{1,1}(t) & h_{1,2}(t) & h_{1,3}(t)
    \end{pmatrix}^\top\,,\\
    V_6(t) &=
    \begin{pmatrix}
     \frac{2 {\GF_1}(t) {h_{1,3}}(t)+2 {h_{1,1}}(t)-2 {\kappa^U} {\tempB}+{\permB}}{2 {\tempB}} & 0 & \frac{{\GF_1}(t) (2 {\GF_1}(t) {h_{1,3}}(t)+2 {h_{1,1}}(t)+{\permB})}{2 {\tempB}} \\
     \frac{-{\tempB} {f_3}(t)+{\GF_1}(t) {h_{2,3}}(t)+{h_{1,2}}(t)}{{\tempB}} & {f_3}(t)-{\kappa^U} & \frac{{\GF_1}(t) ({\GF_1}(t) {h_{2,3}}(t)+{h_{1,2}}(t))}{{\tempB}} \\
     \frac{{\GF_1}(t) {h_{3,3}}(t)+{h_{1,3}}(t)-{\tempB}}{{\tempB}} & 1 & \frac{{\tempB} {\GF_0}(t)+{\GF_1}(t) {h_{1,3}}(t)+{\GF_1}(t)^2 {h_{3,3}}(t)-{\kappa^U} {\tempB}}{{\tempB}}
    \end{pmatrix}\,,\\
    V_7(t) &= \frac{1}{{\tempB}}\,
    \begin{pmatrix}
     \, \boxed{\begin{matrix} 2 {\GF_1}(t) {h_{1,4}}(t) {h_{3,4}}(t)+{\GF_1}(t)^2 {h_{3,4}}(t)^2\\+2 {\tempB} {\GF^\alpha}(t) {h_{3,4}}(t)+{h_{1,4}}(t)^2 \end{matrix}} & \boxed{\begin{matrix} {h_{1,4}}(t) ({\GF_1}(t) {h_{3,5}}(t)+{h_{1,5}}(t)-{\tempB})\\+{\GF_1}(t) {h_{1,5}}(t) {h_{3,4}}(t)\\ +{\GF_1}(t)^2 {h_{3,4}}(t) {h_{3,5}}(t)+{\tempB} {\GF^\alpha}(t) {h_{3,5}}(t) \end{matrix}}\, \\
     \, \boxed{\begin{matrix} {h_{1,4}}(t) ({\GF_1}(t) {h_{3,5}}(t)+{h_{1,5}}(t)-{\tempB})\\+{\GF_1}(t) {h_{1,5}}(t) {h_{3,4}}(t)\\+{\GF_1}(t)^2 {h_{3,4}}(t) {h_{3,5}}(t)+{\tempB} {\GF^\alpha}(t) {h_{3,5}}(t) \end{matrix}} & \boxed{\begin{matrix} 2 {\GF_1}(t) {h_{1,5}}(t) {h_{3,5}}(t)+{\GF_1}(t)^2 {h_{3,5}}(t)^2\\-2 {\tempB} {h_{1,5}}(t)+{h_{1,5}}(t)^2+{\tempB} {\tempU} \end{matrix}}
    \end{pmatrix}\,,\\
    V_8(t) &= \begin{pmatrix}
        -\kappa^\alpha & 0\\
        0 & -\kappa^U
    \end{pmatrix}\,,\\
    U_0(t) &= \begin{pmatrix}
        h_{1,3}(t) & h_{2,3}(t) & h_{3,3}(t) & h_{3,4}(t) & h_{3,5}(t)
    \end{pmatrix}\, \GF_2(t)\,,\\
    U_1(t) &=
    \begin{pmatrix}
     \frac{2 {\GF_1}(t) {h_{1,3}}(t)+2 {h_{1,1}}(t)+{\permB}}{2 {\tempB}} & 0 & \frac{{\GF_1}(t) (2 {\GF_1}(t) {h_{1,3}}(t)+2 {h_{1,1}}(t)+{\permB})}{2 {\tempB}} & 0 & 0 \\
     \frac{-{\tempB} {f_3}(t)+{\GF_1}(t) {h_{2,3}}(t)+{h_{1,2}}(t)}{{\tempB}} & {f_3}(t) & \frac{{\GF_1}(t) ({\GF_1}(t) {h_{2,3}}(t)+{h_{1,2}}(t))}{{\tempB}} & 0 & 0 \\
     \frac{{\GF_1}(t) {h_{3,3}}(t)+{h_{1,3}}(t)-{\tempB}}{{\tempB}} & 1 & {\GF_0}(t)+\frac{{\GF_1}(t) ({\GF_1}(t) {h_{3,3}}(t)+{h_{1,3}}(t))}{{\tempB}} & 0 & 0 \\
     \frac{{\GF_1}(t) {h_{3,4}}(t)+{h_{1,4}}(t)}{{\tempB}} & 0 & \frac{{\GF_1}(t) ({\GF_1}(t) {h_{3,4}}(t)+{h_{1,4}}(t))}{{\tempB}}+{\GF_\alpha}(t) & -{\kappa^\alpha} & 0 \\
     \frac{{\GF_1}(t) {h_{3,5}}(t)+{h_{1,5}}(t)-{\tempB}}{{\tempB}} & 0 & \frac{{\GF_1}(t) ({\GF_1}(t) {h_{3,5}}(t)+{h_{1,5}}(t))}{{\tempB}} & 0 & -{\kappa^U} \\
    \end{pmatrix}\,,\\
    U_2(t) &= -\frac{1}{{\tempB}}\, \Bigg[2\, {\GF_1}(t)\, {h_{0,1}}(t)\, {h_{0,3}}(t)+{\GF_1}(t)^2\, {h_{0,3}}(t)^2+2\, {\tempB}\, {\GF_2}(t)\, {h_{0,3}}(t)\\
    &\quad \quad \quad +2 \,{\tempB}\, {\GF_5}(t)\, {\GF_7}(t)\, {h_{3,4}}(t)\, {\varBalt}(t)+{\tempB}\, {\GF_5}(t)^2\, {h_{3,3}}(t)+2\, {\sigma^\alpha}\, {\tempB}\, {\GF_5}(t)\, {h_{3,4}}(t)\, \kF (t)\\
    &\quad \quad \quad +2\, {\sigma^\alpha}\, {\tempB}\, {\GF_7}(t)\, {h_{4,4}}(t)\, \kF (t)\, {\varBalt}(t)+{\tempB}\, {\GF_7}(t)^2\, {h_{4,4}}(t)\, {\varBalt}(t)^2+{h_{0,1}}(t)^2\\
    &\quad \quad \quad +(\sigma^\alpha)^2\, {\tempB}\, {h_{4,4}}(t)\, \kF (t)^2+(\sigma^U)^2\, {\tempB}\, {h_{5,5}}(t)\Bigg]\,.
\end{align}
Solving the HJB equation is equivalent to solving the following system of matrix differential equations
\begin{align}
    0 &= V_0(t) + V_1(t)\, H_{1,1}(t) + H_{1,1}(t)\, V_1(t)^\top + H_{1,1}(t)\, V_2\,(t)\, H_{1,1}(t) + H_{1,1}^{'}(t)\,, \label{eq: matrix riccati hjb alternative filter}\\
    0 &= V_3(t) + V_4(t)\, H_{1,4}(t) + H_{1,4}'(t)\,,\\
    0 &= V_5(t) + V_6(t)\, H_{1,5}(t) + H_{1,5}'(t)\,,\\
    0 &= V_7(t) + V_8(t)\, H_{4,4}(t) + H_{4,4}(t)\, V_8(t)^\top +  H_{4,4}'(t)\,,\\
    0 &= U_0(t) + U_1(t)\, H_{0,1}(t)^\top + H_{0,1}'(t)^\top\,,\\
    0 &= U_2(t) + h_{0,0}'(t)\,,
\end{align}
with terminal condition
\begin{align}
    H_{1,1}(T) &= \begin{pmatrix}
        -(\termcostBalt_T) & 0 &0\\
        0 & 0 & 0\\
        0 &0 & 0
    \end{pmatrix} \label{eq: matrix riccati terminal hjb alternative}
\end{align}
and all of the other terminal conditions are zero. One solves the individual matrix ODEs in the following order:
\begin{equation}
      H_{1,1} \to (H_{1,4}, H_{1,5}) \to H_{4,4} \to H_{0,1} \to h_{0,0}\,.
\end{equation}
Observe that the existence of the solution only depends on the first matrix equation \eqref{eq: matrix riccati hjb alternative filter} as the rest of the equations turn into linear ODEs when one obtains $H_{1,1}$.

\subsection{Second order effects}\label{app: second order}
Here, we explore how to incorporate this second-order effect into the informed trader's problem and understand its economic value.

\subsubsection{Informed trader}

From the perspective of the informed trader, a second-order effect can be incorporated by allowing the model for the broker's trading speed to be of the form
\begin{equation}
    \nuB_t = c_t\, \nuI_t + \underline{\nuB}_t\,,
\end{equation}
where $c_t$ is a deterministic function that reflects how the broker externalises his trading flow, and $\underline{\nuB}_t$ is the remainder of the broker's trading flow. In this setup, the informed trader assumes that $\underline{\nuB}$ is  exogenous and mean reverting, as in the initial model. Given this assumption, and similar to the approach above, the informed trader can filter $\underline{\nu_B}$ to compute  $\hat{\underline{\nuB}}$, and then  formulate an adaptive control problem. This control problem would be similar to that in \eqref{eq:trader_hjb}, but where one changes $\hat{\nuB}$ to $\underline{\hat{\nuB}}$, and adds a term of the form $\permB\, c_t\, \nuI_t\, \partial_s H^I$.  While we omit the detailed calculations here, the intuition is that the informed trader benefits in this modified model, as he effectively ``observes'' more of the broker's trading speed, which enhances his strategic advantage.

Another approach is for the informed trader to assume that the broker's trading follows the model
\begin{equation}
    \nuB_t = c_t^1\, \hat{\alpha}_t + c_t^2\, \QI_t + \underline{\nuB}_t\,,
\end{equation}
where $\hat{\alpha}_t$ is the broker's estimate of the signal, which the informed trader cannot construct independently.
Similar to the broker's problem in Section \ref{sec: broker}, the informed trader could substitute $\hat{\alpha}$  with her true signal and filter $\underline{\nuB}_t$, assuming the remaining trading flow of the broker is mean reverting and exogenous. Mathematically,  the price process would follow the dynamics
\begin{equation}
    \d S_t = \left[\permB\, \hat{\underline{\nuB}}_t + (1+\permB\, c_t^1)\, \alpha_t + \permB\, c_t^2\, \QI_t\right]\, \d t + \sigma^S\, \d \tilde{W}^{S,I}_t\,,
\end{equation}
where $\hat{\underline{\nuB}}_t$ is the filter of ${\underline{\nuB}}_t$. The techniques to solve such a control
problem are similar to what we report above, and thus omitted.

\subsubsection{Broker}

To study the impact of the second-order effects in the  strategy of the broker, we introduce the following modification to the initial broker's assumption \eqref{eq: broker_model_assumption} regarding the informed trader's strategy:
\begin{equation}
\label{eq: broker_belief}
    \nuIstar_t = f_1(t)\,\hat{\alpha}_t  + c\,f_2(t)\, \nuB_t + f_3(t)\,\QI[I*]_t\,,
\end{equation}
where the new parameter $c$ is employed to assess the degree to which the broker believes she can influence the behaviour of the informed trader.

Based on the model assumption \eqref{eq: broker_belief}, the broker computes her optimal speed $\nuB^{*, c}$. Note that the case $c=1$ corresponds to the initial model described in Section \ref{sec: broker}. Our aim is now to examine how $\nuB^{*, c}$ deviates from the optimal strategy in the original model where $c = 1$. To study the manipulation effects, we reduce the influence of the risk terms on the strategies of both the broker and the informed trader by setting $\beta_0^I = \beta_0^B = 10^{-5}$. Figure \ref{fig:second_order} reports our results.

\begin{figure}[H]
    \centering
    \includegraphics[width=0.8\textwidth]{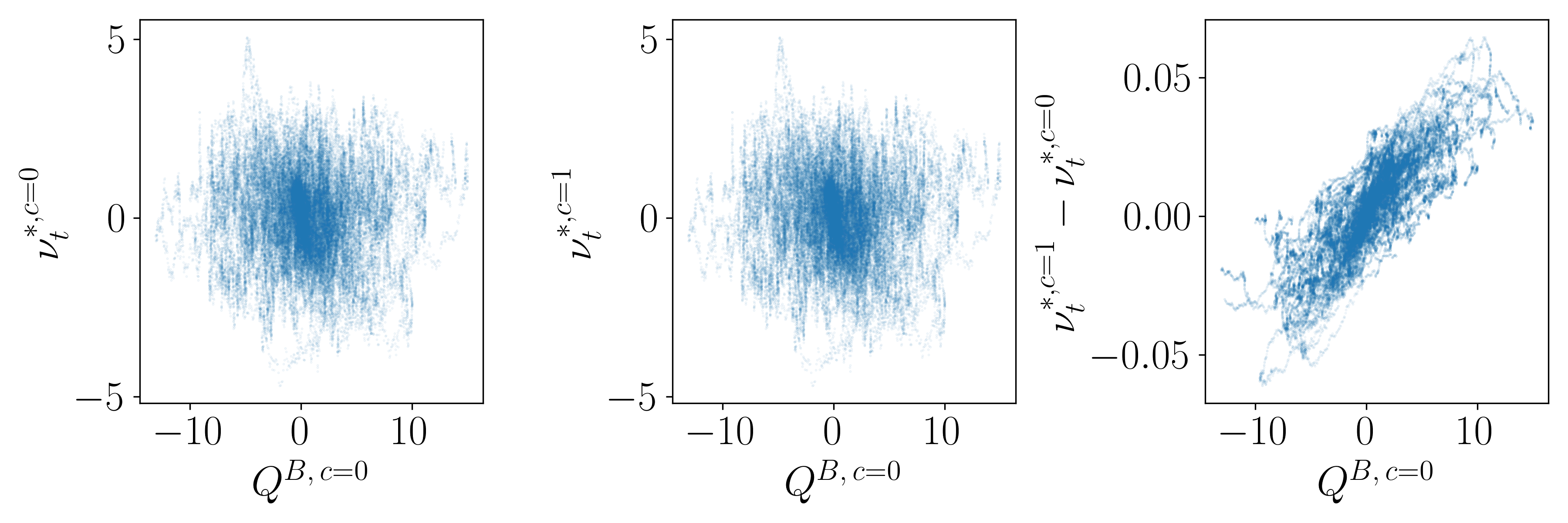}
    
    \caption{
    Scatter plot of $\nuB^{*,c=0}$, $\nuB^{*,c=1}$, and $\nuB^{*,c=1}-\nuB^{*,c=0}$ against the inventory of the broker for $c=0$ and $c=1$.
    \label{fig:second_order}
    }
\end{figure}

The broker's trading speed is a component in the informed trader's strategy \eqref{eq: broker_belief}. Thus, in our original model, the assumption \eqref{eq: broker_belief} may be interpreted as a certain ``belief'' of the broker in her ability to influence the informed trader's strategy. Figure \ref{fig:second_order} shows that the broker uses this component to her advantage. More precisely, the broker will, when  profitable, intentionally mislead the informed trader into trading in the opposite direction of her inventory. The aim is to offload her inventory to her client rather than unwinding  it  in the lit market herself. 

\end{document}